\documentclass[11pt]{article}

\usepackage[T1]{fontenc}
\usepackage[utf8]{inputenc}
\usepackage{lmodern}
\usepackage[margin=1in]{geometry}
\usepackage{microtype}
\usepackage{amsmath,amssymb,amsthm}
\usepackage{xcolor}
\usepackage{algorithm}
\usepackage[noend]{algpseudocode}
\usepackage{hyperref}
\usepackage[nameinlink,noabbrev]{cleveref}
\hypersetup{
  colorlinks=true,
  linkcolor=blue!50!black,
  citecolor=blue!50!black,
  urlcolor=blue!50!black,
  pdftitle={Extended One-Liners for the Gamma, Poisson, and Binomial Distributions},
  pdfauthor={Dylan Greaves}
}

\theoremstyle{plain}
\newtheorem{theorem}{Theorem}
\newtheorem{lemma}[theorem]{Lemma}
\newtheorem{proposition}[theorem]{Proposition}
\theoremstyle{remark}
\newtheorem{remark}[theorem]{Remark}

\Crefname{theorem}{Theorem}{Theorems}
\Crefname{lemma}{Lemma}{Lemmas}
\Crefname{algorithm}{Algorithm}{Algorithms}
\makeatletter
\providecommand{\theHALG@line}{\thealgorithm.\arabic{ALG@line}}
\makeatother

\newcommand{\1}{\mathbf 1}
\newcommand{\R}{\mathbb R}
\newcommand{\E}{\mathbb E}
\newcommand{\Pp}{\mathbb P}
\newcommand{\Unif}{\operatorname{Unif}}
\newcommand{\Pois}{\operatorname{Poisson}}
\newcommand{\Bin}{\operatorname{Binomial}}
\newcommand{\Gam}{\operatorname{Gamma}}
\newcommand{\Cat}{\operatorname{Categorical}}
\newcommand{\dd}{\,d}
\newcommand{\eqdist}{\mathrel{\overset{\mathrm{d}}{=}}}

\newcommand{\paperTitle}{Extended One-Liners for the Gamma, Poisson, and Binomial Distributions}
\title{\paperTitle}
\author{Dylan Greaves\\\texttt{dgreaves103@gmail.com}}
\date{}

\begin{document}
\maketitle

\begin{abstract}
  We demonstrate explicit transformations of three independent uniforms with exact Gamma$(a,1)$, $a>0$,
  Poisson$(\lambda)$, $\lambda>0$, and Binomial$(N,p)$, $N\geq 1$, $0\leq p \leq 1$ laws using only
  elementary operations.
\end{abstract}

\section{Introduction}
\label{sec:intro}

When possible, representing a family of distributions
$\{P_\theta\}_{\theta\in\Theta}$ as a simple transformation $T$ of $k$ i.i.d.\ uniform random variates
\[
  T(U_1,\ldots,U_k;\theta)\sim P_\theta,
  \qquad
  U_1,\ldots,U_k\stackrel{\mathrm{i.i.d.}}{\sim}\operatorname{Unif}(0,1),
\]
has several appealing properties. These methods are
easy to implement, consume a fixed number of uniforms,
which is useful in common-random-number couplings, and their
lack of loops can make them competitive as sampling methods on GPUs~\cite{howes2007cuda}.
Additionally, if $T$ is sufficiently
regular, these transformations find application in quasi-Monte Carlo~\cite{practicalqmc} and machine
learning~\cite{figurnov2018implicit,mohamed2020montecarlo}.

To define the class of simple transformations, Devroye~\cite{Devroye1996} introduces the notion of a
one-liner. We say a transformation is a
\emph{one-liner} if it can be written as an expression tree whose leaves are i.i.d.\ uniform random variates, parameters $\theta$, and constants, and whose internal nodes are operators belonging to a permitted class $\mathcal F$ of elementary operations, e.g.
\[
  \mathcal{F}=\{+,-,\times,/,\bmod,\operatorname{round},|\cdot|,
  \operatorname{sign},\lfloor\cdot\rfloor,\lceil\cdot\rceil,
  \sin,\cos,\exp,\log,\tan,\arctan\}.
\]
While $\mathcal F$ is fairly restrictive, we can express several commonly used operations by
composing these primitive operations, for example branching and powers:
\[
  \1_{\{x>y\}}=\lceil\operatorname{sign}(x-y)/2\rceil,
  \qquad
  x^y=\exp(y\log x),\quad x>0.
\]
If intermediate nodes can be reused, so that the transformation can be represented as an expression
DAG rather than a tree, we call the transformation an \emph{extended
one-liner}.
We call the
family $\{P_\theta\}_{\theta\in\Theta}$ \emph{$k$-simple}~\cite{Devroye2006} if there is an
extended one-liner that generates
$P_\theta$ for every $\theta\in\Theta$ using at most $k$ independent
uniforms.

Several well-known families are known to be $k$-simple for some $k$, including the normal
distribution for $k=2$ by the Box--Muller transformation,
Student's t distribution, and the symmetric beta distribution. Devroye singled out the general gamma
and Poisson distributions as two notable exceptions for which no extended one-liner was
known~\cite{Devroye1996,Devroye2006}.
Indeed, all widely used exact methods for these distributions rely on
either rejection sampling or
inversion; see, e.g.,
\cite{Devroye1986,MarsagliaTsang2000,Hormann1993,Knuth1997}.

A recent paper by the author showed that gamma is 3-simple for the subfamily with shape parameter
below one~\cite{GreavesBelowOne}.
In this paper, we answer Devroye's original question in the affirmative: the general gamma and
Poisson are 3-simple. Additionally, binomial is shown to be 3-simple using a similar technique.
In all three cases, the core idea is to use a generalization of the acceptance-complement method
with a randomized acceptance threshold. To avoid computing normalizing constants not directly
available as operations in $\mathcal F$,
we construct non-negative bounded random variables that equal these constants in expectation, taking
advantage of some contour integral identities.

Throughout, we use the parameterizations:
\[
  \begin{aligned}
    \Gam(a,\theta):\quad &f(x;a,\theta)=\frac{x^{a-1}e^{-x/\theta}}{\Gamma(a)\theta^a},
    &&x>0,\quad a,\theta>0,\\
    \Pois(\lambda):\quad &p(k;\lambda)=e^{-\lambda}\frac{\lambda^k}{k!},
    &&k=0,1,\ldots,\quad\lambda>0,\\
    \Bin(N,p):\quad &p(k;N,p)=\binom Nk p^k(1-p)^{N-k},
    &&k=0,\ldots,N.
  \end{aligned}
\]
We suppress the scale parameter $\theta$, using $\Gam(a)$ to denote $\Gam(a,1)$, as any one-liner
for $\Gam(a)$ extends to $\Gam(a,\theta)$ by the identity $\Gam(a,\theta)\eqdist \theta\Gam(a)$.
\section{Preliminaries}
\label{sec:preliminaries}

\subsection{Generalized Acceptance-Complement}
\label{sec:acceptance-complement}
All three extended one-liners use a generalization of the acceptance-complement
method~\cite{KronmalPeterson1981,KronmalPeterson1984} with a randomized acceptance threshold,
analogous
to the generalized rejection method~\cite[Chapter~II, Theorem~3.4]{Devroye1986}, which we outline
below.
\begin{lemma}[Generalized acceptance-complement]\label{lem:residual}
  Let $\mu$ be a target distribution on a space $S$, and let $(Y,A)$ be
  an $S\times[0,1]$-valued random variable, where $Y$ is the proposed value and $A$
  its acceptance probability.  Define the acceptance subprobability measure
  \[
    Q(B)=\E[A\1_{\{Y\in B\}}],
  \]
  and suppose that $\E A<1$ and $Q(B)\leq\mu(B)$ for every measurable set $B$.  Given
  $U\sim\Unif(0,1)$ independent of $(Y,A)$, return $Y$ if $U\leq A$;
  otherwise return an independent draw from the distribution
  \[
    R(B)=\frac{\mu(B)-Q(B)}{1-\E A}.
  \]
  Then the returned random variable has distribution $\mu$.
\end{lemma}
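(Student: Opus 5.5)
The plan is to compute the law of the output $X$ directly, splitting on the event that the proposal is accepted, and to check that the complement measure $R$ is a genuine probability measure.

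First, $R$ is well defined. Since $0\le A\le 1$ and $\E A<1$, the denominator $1-\E A$ is strictly positive. The hypothesis $Q(B)\le\mu(B)$ for all measurable $B$ makes $\mu-Q$ a nonnegative set function. It is countably additive, because $\mu$ is a measure and $Q$ is countably additive by monotone convergence applied to $\E[A\1_{\{Y\in B\}}]$. Its total mass is $\mu(S)-Q(S)=1-\E A$. Hence $R$ is a probability measure, and the rejection branch makes sense.

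Next, fix a measurable $B$ and condition on $(Y,A)$. Since $U$ is independent of $(Y,A)$ and uniform, $\Pp(U\le A\mid Y,A)=A$. Therefore
\[
  \Pp(U\le A,\ Y\in B)=\E[A\1_{\{Y\in B\}}]=Q(B),
  \qquad
  \Pp(U>A)=1-\E A .
\]
On the rejection event, the returned value is an independent draw $Z\sim R$, independent of $(U,Y,A)$. So
\[
  \Pp(U>A,\ Z\in B)=(1-\E A)\,R(B)=\mu(B)-Q(B).
\]
Adding the accepted and rejected contributions gives $\Pp(X\in B)=Q(B)+\mu(B)-Q(B)=\mu(B)$, which is the claim.

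I do not expect a serious obstacle. The only care needed is in the measure-theoretic bookkeeping. One point is the independence of the rejection-branch draw from $(U,Y,A)$, which justifies factorizing the probability. The other is the countable additivity and nonnegativity of $\mu-Q$, which is exactly where the domination hypothesis $Q\le\mu$ and the condition $\E A<1$ are used.
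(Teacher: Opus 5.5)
Your proof is correct and follows the same route as the paper: split on acceptance, use $\Pp(U\le A\mid Y,A)=A$ to obtain $Q(B)$ from the accepted branch and $(1-\E A)R(B)$ from the rejected branch, then add to get $\mu(B)$. You also check that $R$ is a genuine probability measure (nonnegativity from $Q\le\mu$, countable additivity of $Q$ by monotone convergence, total mass $1$), which the paper leaves implicit.
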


\begin{proof}
  Let $\widehat X$ denote the value returned by the algorithm.  For every measurable set
  $B$, we have
  \[
    \begin{aligned}
      \Pp(\widehat X\in B)
      &=\Pp(U\leq A,\,Y\in B)+\Pp(U>A)R(B)\\
      &=\E[A\1_{\{Y\in B\}}]+(1-\E A)R(B)\\
      &=Q(B)+(1-\E A)R(B)\\
      &=\mu(B).
    \end{aligned}
  \]
\end{proof}
In particular, when the target and proposal distributions have densities,
Lemma~\ref{lem:residual} gives the following:

\begin{lemma}\label{lem:conditional-repair}
  Let $p$ be a target density, let the proposal $Y$ have density $q$, and let
  $A\in[0,1]$ be a random acceptance probability, possibly dependent on $Y$.
  Suppose $r\geq0$, $\int r>0$, and $0\leq p-r\leq q$.  Suppose also that
  \begin{equation}
    \E[A\mid Y=y]=\frac{p(y)-r(y)}{q(y)},
    \qquad q(y)>0.
    \label{eq:conditional-repair}
  \end{equation}
  Generate a pair $(Y, A)$, and return $Y$ with probability
  $A$; otherwise return an independent random variate with density $r/\int r$. Then the output is a random variate
  with density $p$.
\end{lemma}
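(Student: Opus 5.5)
We derive this from Lemma~\ref{lem:residual} by taking $\mu(B)=\int_B p$ and computing the acceptance subprobability measure $Q$ from the conditional-expectation hypothesis \eqref{eq:conditional-repair}. By the tower property, for each measurable $B$,
\[
  Q(B)=\E[A\1_{\{Y\in B\}}]=\E\bigl[\1_{\{Y\in B\}}\E[A\mid Y]\bigr]=\int_{B\cap\{q>0\}}\frac{p(y)-r(y)}{q(y)}\,q(y)\dd y=\int_{B\cap\{q>0\}}(p-r).
\]
The hypothesis $0\le p-r\le q$ forces $p-r=0$ wherever $q=0$. Hence $Q(B)=\int_B(p-r)$ for every $B$.

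Next we check the hypotheses of Lemma~\ref{lem:residual}. Since $r\ge0$, we have $Q(B)=\int_B p-\int_B r\le\mu(B)$. Taking $B=S$ gives
\[
  \E A=Q(S)=1-\int r.
\]
Because $\int r>0$, this yields $\E A<1$. It remains to identify the repair distribution:
\[
  R(B)=\frac{\mu(B)-Q(B)}{1-\E A}=\frac{\int_B r}{\int r},
\]
which is exactly the law with density $r/\int r$. Lemma~\ref{lem:residual} then says the output has law $\mu$, i.e.\ density $p$.

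The only delicate point is the measure-theoretic bookkeeping. The conditional expectation is specified only on $\{q>0\}$, so we must argue that $\{q=0\}$ contributes nothing to $Q$, which follows from $p-r\le q$. We also need $A\in[0,1]$ so that returning $Y$ with probability $A$ is meaningful; this is given in the hypotheses. Finally, note that $\int r=1-\E A\le1$, so $\int r$ is finite and the normalization $r/\int r$ is well defined.
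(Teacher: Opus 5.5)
Your proof is correct and takes the paper's route: the paper gives no separate proof and presents this lemma as an immediate consequence of Lemma~\ref{lem:residual} with $\mu(B)=\int_B p$. Your computation of $Q(B)=\int_B(p-r)$ (with $p-r\le q$ ruling out any contribution from $\{q=0\}$), $\E A=1-\int r<1$, and $R(B)=\int_B r/\int r$ supplies exactly the omitted details, though finiteness of $\int r$ follows more directly from $0\le r\le p$, which you already need to split $\int_B(p-r)=\int_B p-\int_B r$.
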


Thus, to construct an extended one-liner for $p$, it suffices to 1) find $q$ and $r$ that are easy
to sample from satisfying the above inequalities, and 2) find a one-liner $A$ with conditional
expectation equal to
\eqref{eq:conditional-repair}.

\subsection{Random Weights}

\label{sec:bounded-contour-weights}

The main challenge of realizing $A$ as a one-liner is that the target density
involves terms $1/\Gamma(a)$, $1/k!$, and $\binom{N}{k}$ that
are not directly available as operations in $\mathcal{F}$. To avoid evaluating these, we
construct bounded nonnegative random variables,
expressible as one-liners, that equal these constants in expectation.
\begin{lemma}[Reciprocal-gamma weight]\label{lem:reciprocal-gamma-weight}
  For $0\leq\theta<\pi$, put
  \[
    \Phi(\theta)=1-\theta\cot\theta+\log\frac{\theta}{\sin\theta},
    \qquad \Phi(0)=0,
  \]
  and, for $a>0$, define
  \begin{equation}
    W_a(\theta)=
    \begin{cases}
      \dfrac{(e/a)^a}{\sqrt{2\pi a}}
      \exp\left\{-a\left(\Phi(\theta)-\dfrac{\theta^2}{2}\right)\right\},
      &0\leq\theta<\pi,\\[2mm]
      0,&\theta\geq\pi.
    \end{cases}
    \label{eq:reciprocal-weight}
  \end{equation}

  For $a>0$, if $Z\sim N(0,1)$ and
  \[
    \Theta=\frac{|Z|}{\sqrt a},
  \]
  then
  \begin{equation}
    \E W_a(\Theta)=\frac1{\Gamma(a+1)},
    \qquad
    0\leq W_a\leq\frac{(e/a)^a}{\sqrt{2\pi a}}. \label{eq:reciprocal-facts}
  \end{equation}
\end{lemma}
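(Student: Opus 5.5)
The plan is to reduce the expectation to a single real integral over $[0,\pi)$ and recognize that integral as Hankel's contour integral for $1/\Gamma$, taken along a contour on which the integrand is real. The bound in \eqref{eq:reciprocal-facts} will follow from a one-line derivative computation.

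First I compute $\E W_a(\Theta)$ directly. Since $\Theta=|Z|/\sqrt a$ has density $\frac{2\sqrt a}{\sqrt{2\pi}}e^{-a\theta^2/2}$ on $[0,\infty)$, the Gaussian factor cancels the $e^{a\theta^2/2}$ in $W_a$. This gives
\[
  \E W_a(\Theta)=\frac{(e/a)^a}{\pi}\int_0^\pi e^{-a\Phi(\theta)}\dd\theta
  =\frac{1}{\pi}\int_0^\pi e^{a\theta\cot\theta}\Bigl(\frac{\sin\theta}{a\theta}\Bigr)^{a}\dd\theta .
\]
It therefore suffices to show that the right-hand side equals $1/\Gamma(a+1)$.

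For this I use Hankel's representation
\[
  \frac1{\Gamma(a+1)}=\frac1{2\pi i}\int_{\mathcal H}e^{t}t^{-a-1}\dd t,
\]
where $\mathcal H$ encircles the negative real axis. I deform $\mathcal H$ to the steepest-descent (Talbot-type) contour
\[
  t(\theta)=a\theta\cot\theta+ia\theta,\qquad \theta\in(-\pi,\pi).
\]
This curve passes through the saddle point $t=a$ and satisfies $\arg t=\theta$ and $|t|=a\theta/\sin\theta$. As $\theta\to\pm\pi$ we have $\operatorname{Re}t\to-\infty$ while $\operatorname{Im}t\to\pm a\pi$ stays bounded. Hence $e^t$ decays on the tails and in the connecting strips, so the deformation is justified by Cauchy's theorem; I regard this as routine.

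On the contour, $\operatorname{Im}(t-a\log t)=a\theta-a\theta=0$. So $e^t t^{-a}=e^{a\theta\cot\theta}(\sin\theta/(a\theta))^a$ is real, positive and even in $\theta$. Writing $t^{-1}\dd t=\dd\log|t|+i\dd\theta$, the real part $\dd\log|t|$ is odd in $\theta$, so its contribution against the even factor integrates to zero. What remains is
\[
  \frac1{2\pi}\int_{-\pi}^{\pi}e^{a\theta\cot\theta}\Bigl(\frac{\sin\theta}{a\theta}\Bigr)^a\dd\theta,
\]
which is exactly the integral above. I expect this contour step, with its justification, to be the main obstacle, though it appears standard.

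For the bound, nonnegativity is clear. For the upper bound it suffices to show $\Phi(\theta)\geq\theta^2/2$ on $[0,\pi)$. Differentiating gives $\Phi'(\theta)=\theta\csc^2\theta-2\cot\theta+1/\theta$. Using $\csc^2\theta=1+\cot^2\theta$, this rearranges to
\[
  \Phi'(\theta)-\theta=\theta\bigl(\cot\theta-1/\theta\bigr)^2\geq0 .
\]
Combined with $\Phi(0)=0$, this yields $\Phi(\theta)-\theta^2/2\geq0$, and therefore $W_a\leq(e/a)^a/\sqrt{2\pi a}$.
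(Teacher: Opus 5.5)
Your proposal is correct and follows the same route as the paper. Both reduce $\E W_a(\Theta)$ to $\frac{(e/a)^a}{\pi}\int_0^\pi e^{-a\Phi(\theta)}\,d\theta$ and identify this with $1/\Gamma(a+1)$; the paper only cites Temme/DLMF for that identity (saddlepoint method on Hankel's integral), whereas you derive it explicitly via the contour $t=a\theta\cot\theta+ia\theta$. Both also bound $W_a$ with the same derivative identity, $\Phi'(\theta)-\theta=\theta(\cot\theta-1/\theta)^2=(\theta\cos\theta-\sin\theta)^2/(\theta\sin^2\theta)$, where the monotonicity step needs $\lim_{\theta\downarrow0}\Phi(\theta)=0$ (continuity at $0$) rather than just the convention $\Phi(0)=0$.
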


\begin{proof}
  See Appendix~\ref{app:reciprocal-gamma-weight}.
\end{proof}

\begin{lemma}[Binomial weight]\label{lem:binomial-weight}
  For integers $0<k<N$, set
  \[
    x=\frac{k}{N},\qquad y=1-x,
  \]
  and, for $0<\theta<\pi$, put
  \[
    \Phi_x(\theta)
    =\log\frac{\theta}{\sin\theta}
    -x\log\frac{x\theta}{\sin(x\theta)}
    -y\log\frac{y\theta}{\sin(y\theta)},
    \qquad \Phi_x(0)=0,
  \]
  and
  \begin{equation}
    W_{N,k}(\theta)=
    \begin{cases}
      \dfrac1{\sqrt{2\pi Nxy}}
      \exp\left\{-N\left(\Phi_x(\theta)-\dfrac{xy\theta^2}{2}\right)\right\},
      &0\leq\theta<\pi,\\[2mm]
      0,&\theta\geq\pi.
    \end{cases}
    \label{eq:binomial-weight}
  \end{equation}

  If $Z\sim N(0,1)$ and
  \[
    \Theta=\frac{|Z|}{\sqrt{Nxy}},
  \]
  then
  \begin{equation}
    \E W_{N,k}(\Theta)=\binom Nkx^ky^{N-k},
    \qquad 0\leq W_{N,k}\leq\frac1{\sqrt{2\pi Nxy}}. \label{eq:binomial-weight-facts}
  \end{equation}
\end{lemma}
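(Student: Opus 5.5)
We will mirror the reciprocal-gamma argument of Lemma~\ref{lem:reciprocal-gamma-weight}, starting from a contour integral for the binomial coefficient. By Cauchy's formula, $\binom Nk$ is the coefficient of $z^k$ in $(1+z)^N$. Put $z=re^{i\theta}$ with the saddle-point radius $r=x/y$. Since $y^N(1+x/y)^N=1$ and $x^k y^{N-k}(y/x)^k=y^N$, this gives
\[
  \binom Nk x^ky^{N-k}=\frac{1}{2\pi}\int_{-\pi}^{\pi}\bigl(y+xe^{i\theta}\bigr)^N e^{-ik\theta}\dd\theta .
\]
The imaginary part of the integrand is odd in $\theta$, so it integrates to zero. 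This leaves $\frac1\pi\int_0^\pi\operatorname{Re}[\cdots]\dd\theta$. However, this real part is not nonnegative, so it cannot serve directly as a density ratio. We therefore deform the contour, following the same device that produced $\Phi$ in the gamma case.

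We choose a path $\rho(\theta)e^{i\theta}$ on which the integrand $(y+x\rho e^{i\theta})^N\rho^{-k}e^{-ik\theta}$ is \emph{real and positive}, that is, a steepest-descent contour. We look for it in the form
\[
  z(\theta)=\frac{\sin(x\theta)}{\sin(y\theta)}\,e^{i\theta},
  \qquad
  1+z=\frac{\sin\theta}{\sin(y\theta)}\,e^{ix\theta}.
\]
The second expression follows from $\sin(y\theta)+\sin(x\theta)e^{i\theta}=e^{ix\theta}\sin\theta$, using $x+y=1$. With this choice, $(1+z)^N z^{-k}$ has phase $e^{iNx\theta-ik\theta}=1$, so the integrand is real. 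Its modulus is
\[
  \frac{\sin^N\theta}{\sin^{N-k}(y\theta)\,\sin^k(x\theta)}.
\]
After multiplying by $x^ky^{N-k}$, this equals $\exp\{-N\Phi_x(\theta)\}$. The closed curve, taken for $\theta\in(-\pi,\pi)$, winds once around $0$ and avoids $-1$, since $1+z$ vanishes only at $\theta=\pm\pi$. It is therefore homotopic to the circle in $\mathbb C\setminus\{0,-1\}$ where the integrand is analytic, and Cauchy's theorem lets us replace the circle by it.

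The measure becomes $\frac{1}{2\pi i}\frac{dz}{z}=\frac{1}{2\pi}\bigl(1-i\,\frac{d}{d\theta}\log\rho\bigr)\dd\theta$. The term involving $\frac{d}{d\theta}\log\rho$ is odd in $\theta$ times an even function, so it integrates to zero by symmetry. This yields
\[
  \binom Nk x^ky^{N-k}=\frac1\pi\int_0^\pi e^{-N\Phi_x(\theta)}\dd\theta .
\]
Next we rewrite this as an expectation. The half-normal density of $\Theta=|Z|/\sqrt{Nxy}$ is $\sqrt{2Nxy/\pi}\,e^{-Nxy\theta^2/2}$. Dividing the integrand by this density produces exactly $W_{N,k}(\theta)$ on $[0,\pi)$ and $0$ beyond, which gives the expectation identity.

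For the bound, the factor $1/\sqrt{2\pi Nxy}$ is already the right constant. It remains to show
\[
  \Phi_x(\theta)\geq\frac{xy\theta^2}{2}\qquad\text{on }[0,\pi).
\]
This is the step we expect to be the main obstacle. The plan is to write $g(t)=\log\frac{t}{\sin t}$ and use its partial-fraction or power-series expansion
\[
  g(t)=\sum_{n\geq1}\frac{\zeta(2n)}{n\pi^{2n}}\,t^{2n},
\]
which has nonnegative coefficients. Then
\[
  \Phi_x(\theta)=\sum_{n\geq1}\frac{\zeta(2n)}{n\pi^{2n}}\bigl(1-x^{2n+1}-y^{2n+1}\bigr)\theta^{2n}.
\]
Each coefficient is nonnegative, because $x^{2n+1}+y^{2n+1}\leq x+y=1$. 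The $n=1$ term is
\[
  \frac{1}{6}(1-x^3-y^3)\theta^2=\frac16\cdot 3xy\,\theta^2=\frac{xy\theta^2}{2}.
\]
Hence $\Phi_x(\theta)-\frac{xy\theta^2}{2}\geq0$, so $W_{N,k}\leq 1/\sqrt{2\pi Nxy}$. Nonnegativity of $W_{N,k}$ is immediate. Finally, we should double-check that the series converges on $[0,\pi)$, which holds because its radius of convergence is $\pi$, and handle the endpoint $\theta\to\pi$, which is trivial since $W_{N,k}=0$ there.
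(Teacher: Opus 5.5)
Your proof is correct. The second half (the half-normal change of measure, and the bound via $\log(t/\sin t)=\sum_{m\ge1}\frac{\zeta(2m)}{m\pi^{2m}}t^{2m}$ with the $m=1$ term equal to $xy\theta^2/2$) is exactly the paper's argument.

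The genuine difference is how you obtain $\frac1\pi\int_0^\pi e^{-N\Phi_x(\theta)}\,d\theta=\binom Nk x^ky^{N-k}$. The paper does not derive this identity. It quotes Mallows' integral $\frac1\pi\int_0^\pi(\sin\theta)^t(\sin a\theta)^{-at}(\sin b\theta)^{-bt}\,d\theta=\Gamma(t+1)/\{\Gamma(at+1)\Gamma(bt+1)\}$, proved via Kanter's representation of stable laws or directly by Evans--Ismail--Stanton, and specializes to $t=N$, $a=k/N$. You instead prove the needed integer case from scratch. You deform Cauchy's coefficient integral onto the steepest-descent path $z(\theta)=\frac{\sin(x\theta)}{\sin(y\theta)}e^{i\theta}$, on which $x^ky^{N-k}(1+z)^Nz^{-k}=e^{-N\Phi_x(\theta)}$ is real and positive. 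You then discard the $(\log\rho)'$ term by parity. The identities you use ($1+z=e^{ix\theta}\sin\theta/\sin(y\theta)$, $Nx=k$, winding number one because $\arg z(\theta)=\theta$ with $\rho>0$) all check out.

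What each route buys:
\begin{itemize}
\item Your route is self-contained and needs only Cauchy's theorem. It also explains where $\Phi_x$ comes from, in direct parallel with the Hankel-contour origin of $\Phi$ in the reciprocal-gamma lemma.
\item The citation is shorter and covers real $t\ge0$ and arbitrary $a\in(0,1)$. Your argument would not reach that generality directly: for non-integer exponents the branch point $-1$ lies on your contour.
\end{itemize}

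One small correction. Since $\sin(x\pi)=\sin(y\pi)$, you have $\rho(\pm\pi)=1$, so the closed curve passes through $-1$ at $\theta=\pm\pi$. It therefore does not avoid $-1$, and the homotopy should not be claimed in $\mathbb C\setminus\{0,-1\}$. This is harmless: for integer $N$ the integrand $(1+z)^Nz^{-k-1}$ is analytic on $\mathbb C\setminus\{0\}$, so only the winding number about $0$ matters.
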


\begin{proof}
  See Appendix~\ref{app:binomial-weight}.
\end{proof}

\section{Gamma Extended One-Liner}
\label{sec:gamma}

We now construct an extended one-liner for $\Gam(a)$ with $a>0$.  It
suffices to find a one-liner for $a\geq1$, since any such method automatically
extends to $0<a<1$ using the beta--gamma identity
\[
  U^{1/a}\Gam(a+1)\eqdist\Gam(a),
  \qquad U\sim\Unif(0,1).
\]
Alternatively, one can use the one-liner for $\Gam(a)$, $a<1$, developed in
\cite{GreavesBelowOne}.

Suppose now that $a\geq1$.  Our strategy will be to modify the
Marsaglia--Tsang rejection method \cite{MarsagliaTsang2000} into an extended
one-liner using the acceptance-complement method in
Lemma~\ref{lem:residual}.  Following Marsaglia and Tsang, set
\[
  d=a-\frac13,\qquad c=\frac{1}{3\sqrt d},\qquad
  h(x)=d(1+cx)^3.
\]
Marsaglia and Tsang note that if one can generate a random variate $X$ with
density $g_a/\Gamma(a)$, where
\begin{equation}
  \begin{aligned}
    g_a(x)
    &=h'(x)h(x)^{a-1}e^{-h(x)}\\
    &=d^{a-1/2}(1+cx)^{3d}
    e^{-d(1+cx)^3},\qquad x>-1/c,
  \end{aligned} \label{eq:gamma-pullback}
\end{equation}
and 0 otherwise, then $h(X)$ has the
desired $\Gam(a)$ density by change of variables; hence it suffices to find a one-liner for
$g_a/\Gamma(a)$.
To that end, let
\[
  U_1,U_2,U_3\overset{\mathrm{i.i.d.}}{\sim}\Unif(0,1)
\]
and construct the independent standard-normal pair
\[
  (Z_1,Z_2)=\sqrt{-2\log U_1}
  \bigl(\cos(2\pi U_2),\sin(2\pi U_2)\bigr)
\]
using the Box--Muller transform.  Define
\[
  \widehat w_a(z)=aW_a(|z|/\sqrt a),
\]
where
\begin{equation}
  \E\widehat w_a(Z_1)=\frac1{\Gamma(a)}, \label{eq:gamma-weight-mean}
\end{equation}
by Lemma~\ref{lem:reciprocal-gamma-weight}.
Define
\[
  \delta_a=\frac13+\left(a-\frac12\right)\log\frac da,
  \qquad \eta_a=1-e^{-\delta_a},
\]
and let
\begin{equation}
  r_a(x)=\frac65\eta_ag_a(0)
  \left(1-\frac{|x|}{2}\right)_+. \label{eq:gamma-repair}
\end{equation}
We use the proposal $Y=Z_2$ with acceptance probability
\[
  A=\widehat w_a(Z_1)
  \frac{g_a(Z_2)-r_a(Z_2)}{\phi(Z_2)}.
\]
where $\phi(x) = \frac{1}{\sqrt{2\pi}}e^{-x^2/2}$ is the standard-normal density.
The following lemma shows that $0\leq A\leq 1$:
\begin{lemma}[Gamma envelope]\label{lem:gamma-envelope}
  For $a\geq1$ and $x,z\in\R$,
  \[
    0\leq r_a(x)\leq g_a(x),
    \qquad
    0\leq\widehat w_a(z)\{g_a(x)-r_a(x)\}\leq\phi(x).
  \]
\end{lemma}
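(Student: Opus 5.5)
The plan is to factor everything through the Marsaglia--Tsang exponent and reduce both inequalities to one-variable bounds on $|x|\le 2$ and $|x|\ge2$. The first step is to write $t=cx$ and $v=1+t$, and to expand
\[
\log\frac{g_a(x)}{g_a(0)}=3d\log v-d(v^3-1).
\]
Since $9dc^2=1$, the quadratic terms collapse to $-x^2/2$ and the cubic terms cancel, giving
\[
g_a(x)=g_a(0)\,e^{-x^2/2-\psi(x)},\qquad
\psi(x)=3d\Bigl(t-\frac{t^2}{2}+\frac{t^3}{3}-\log(1+t)\Bigr)\ge0 .
\]
This is the classical Marsaglia--Tsang inequality with an explicit gap $\psi\approx x^4/(108d)$ near $0$. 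The gap is larger for $t<0$ than for $t>0$. Also, $1+cx>0$ on $|x|\le2$ because $c\le 1/(3\sqrt{2/3})<1/2$, so the triangular repair $r_a$ is supported where $g_a>0$.

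The second step combines this with the bound $0\le W_a\le (e/a)^a/\sqrt{2\pi a}$ from Lemma~\ref{lem:reciprocal-gamma-weight}. It gives $0\le\widehat w_a\le \sqrt{a/(2\pi)}\,e^a a^{-a}$. Since $g_a(0)=d^{a-1/2}e^{-d}$, a direct computation yields
\[
\sup_z\widehat w_a(z)\cdot g_a(0)e^{-x^2/2}=e^{1/3}(d/a)^{a-1/2}\phi(x)=e^{\delta_a}\phi(x).
\]
Hence the upper inequality $\widehat w_a(g_a-r_a)\le\phi$ reduces to the $a$-dependent scalar inequality
\[
e^{-\psi(x)}-e^{-\delta_a}\le \tfrac65\,\eta_a\,e^{x^2/2}\Bigl(1-\frac{|x|}2\Bigr)_+\qquad\text{for all }x .
\]
Nonnegativity of $\widehat w_a(g_a-r_a)$ follows once $r_a\le g_a$ is shown, because $\widehat w_a\ge0$.

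I would check this reduced inequality in two regimes. For $|x|\ge2$ the right side is $0$, so I need $\psi(x)\ge\delta_a$. Expanding $\log(1-1/(3a))$ gives $\delta_a=1/(9a)+O(a^{-2})$. On the other hand, $\psi(\pm2)\ge \tfrac{16}{108d}(1+O(1/d))\approx 0.148/a$, and $\psi$ is increasing in $|x|$ on each side. So the comparison holds asymptotically, and for moderate $a\ge1$ I would verify it by explicit bounds on $\log(1+t)$, for instance the alternating-series remainder $t^4/4-t^5/5$ for $t>0$ and $\ge t^4/4$ for $t<0$.

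For $|x|<2$, I would use $e^{-\psi}-e^{-\delta_a}\le\min\{\eta_a,\ \delta_a-\psi\}$ together with the quartic lower bound on $\psi$. Near the origin the factor $e^{x^2/2}\ge1$ and the constant $6/5$ absorb the difference. Near $|x|=2$, where $(1-|x|/2)$ vanishes, the margin $\psi-\delta_a>0$ must take over. I expect this to be the main obstacle: a careful but elementary one-variable estimate uniform in $a\ge1$, possibly with a separate numerical check at small $a$ such as $a\in[1,2]$.

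Finally, for $0\le r_a\le g_a$ on $|x|<2$, I need $\tfrac65\eta_a(1-|x|/2)\le e^{-x^2/2-\psi(x)}$. Here $\eta_a\le\delta_a$ is small: $\delta_1=\tfrac13+\tfrac12\log\tfrac23\approx0.13$, and $\delta_a$ is decreasing thereafter. The right side is bounded below on $[-2,2]$ by $e^{-2-\psi(\pm2)}$, with $\psi(\pm2)$ bounded uniformly since $cx\ge-0.82$. So a crude comparison at the endpoints and at $x=0$, using monotonicity of both sides in $|x|$, should suffice.
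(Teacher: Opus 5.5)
Your reduction is correct and is the same as the paper's. Your $\psi$ is exactly the paper's $J_a$, and your scalar inequality is equivalent to $e^{L_a(x)}-1\le M_ar_a(x)/\phi(x)$ with $L_a=\delta_a-J_a$. The gap is that the estimates that actually prove the lemma are not supplied, and the specific tools you name fail near $a=1$, where the margins are thin (e.g.\ $\psi(2)\approx0.135$ against $\delta_1\approx0.1306$). Concretely:
\begin{itemize}
\item In the interior, the bound $e^{-\psi}-e^{-\delta_a}\le\min\{\eta_a,\delta_a-\psi\}$ is too lossy whatever lower bound on $\psi$ you use. At $a=1$, $x=0.8$, take the exact $\psi(0.8)\approx0.0045$. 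Then $\eta_1\approx0.1224$ and $\delta_1-\psi\approx0.1261$, while the right side is $\frac65\eta_1e^{0.32}(0.6)\approx0.1214$.
\item For $|x|\ge2$, the remainder $t^4/4-t^5/5$ at $t=2c$ gives only $\psi(2)\ge0.077$ at $a=1$.
\item For $r_a\le g_a$, at $a=1$ the maximum $\frac65\eta_1\approx0.147$ of the left side exceeds the minimum $e^{-2-\psi(-2)}\approx0.065$ of the right side. Comparing two decreasing functions at the endpoints says nothing about the middle.
\end{itemize}
A numerical check over a continuum of $(a,x)$ does not close these gaps.

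The missing ingredients, as in the paper, are the following.
\begin{itemize}
\item Use the representation $\psi(x)=\frac{c^2x^4}{3}\int_0^1\frac{s^3}{1+cxs}\,ds$. It gives $\psi(-x)\ge\psi(x)$, and on $x>0$ it shows $\psi$ is increasing while $\psi(x)/x^4$ is decreasing.
\item Everything then reduces to the single endpoint inequality $\psi(2)\ge\delta_a$. Jensen with density $4s^3$ gives this uniformly in $a\ge1$: $\psi(2)\ge\frac{20c^2}{15+24c}\ge\frac{2c^2}{2+3c^2}\ge\delta_a$. Hence $\psi\ge\delta_a\min\{u^4,1\}$ with $u=|x|/2$.
\item Replace your Lipschitz bound by convexity, $e^{\delta s}-1\le s(e^{\delta}-1)$ for $s\in[0,1]$. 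This gives $e^{-\psi}-e^{-\delta_a}\le\eta_a(1-u^4)$. The factor $\eta_a$ then cancels, and the target becomes the $a$-free inequality $1-u^4\le\frac65e^{2u^2}(1-u)$, which follows from $1+u^2\le e^{u^2}$ and $1+u\le\frac65e^{u^2}$.
\item For $r_a\le g_a$, keep the product form $r_a/g_a=\frac65\eta_a(1-u)e^{2u^2}e^{\psi}$. The factor $(1-u)e^{2u^2}$ is nonincreasing, since its derivative is $-(1-2u)^2e^{2u^2}$. The same integral gives $\psi\le\frac{c^2x^4}{12(1-2c)}<\frac43$ on $|x|\le2$. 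With $\eta_a\le\delta_a\le\frac{2}{15}$ this yields $r_a/g_a\le\frac{4}{25}e^{4/3}<1$.
\end{itemize}
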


\begin{proof}
  See Appendix~\ref{app:gamma-envelope}.
\end{proof}

By independence and \eqref{eq:gamma-weight-mean}, for $z\in\R$,
\begin{equation}
  \E[A\mid Z_2=z]
  =\frac{g_a(z)-r_a(z)}{\Gamma(a)\phi(z)}
  =\frac{g_a(z)/\Gamma(a)-r_a(z)/\Gamma(a)}{\phi(z)}.
  \label{eq:gamma-conditional-mean}
\end{equation}
By Lemma~\ref{lem:conditional-repair} with
$p=g_a/\Gamma(a)$ and $q=\phi$, the required complement arm density is the triangular density
proportional to $r_a$.  If $V\sim\Unif(0,1)$ independently, we can draw a random variate from this
distribution
using the transformation $T(V)$, where
\begin{equation}
  T(v)=
  \begin{cases}
    2(\sqrt{2v}-1),&0<v\leq1/2,\\
    2\{1-\sqrt{2(1-v)}\},&1/2<v<1.
  \end{cases} \label{eq:triangle-inverse}
\end{equation}
While not necessarily recommended in practice, one can avoid drawing an additional uniform $V$ here
by uniform recycling: on the complement branch $U_3>A$, if
$V\gets(U_3-A)/(1-A)$, then
$V\mid U_3>A\sim\Unif(0,1)$; see
\cite[Section~3.7]{Devroye1986}.

\begin{algorithm}[H]
  \caption{\textsc{GammaCore}$(a)$: $\Gam(a)$ sampler for $a\geq1$}
  \label{alg:gamma-core}
  \begin{algorithmic}[1]
    \Require $a\geq1$
    \State $U_1,U_2,U_3\overset{\mathrm{i.i.d.}}{\sim}\Unif(0,1)$
    \State $R\gets\sqrt{-2\log U_1}$, $Z_1\gets R\cos(2\pi U_2)$, $Z_2\gets R\sin(2\pi U_2)$
    \State $d\gets a-1/3$, $c\gets1/(3\sqrt d)$, $A\gets0$
    \State \textbf{if} $1+cZ_2>0$ \textbf{then}
    \State \hspace{1.2em}$A\gets\widehat w_a(Z_1)\{g_a(Z_2)-r_a(Z_2)\}/\phi(Z_2)$
    \Comment{Alternatively, $1/\Gamma(a)$ for $\widehat w_a(Z_1)$ if $\Gamma$ is permitted}
    \State \textbf{if} $A>0$ and $U_3<A$ \textbf{then}
    \State \hspace{1.2em}\textbf{return} $d(1+cZ_2)^3$
    \State $V\gets(U_3-A)/(1-A)$, $X\gets T(V)$
    \State \textbf{return} $d(1+cX)^3$
  \end{algorithmic}
\end{algorithm}

\begin{algorithm}[H]
  \caption{$\Gam(a)$ sampler}
  \label{alg:gamma}
  \begin{algorithmic}[1]
    \Require $a>0$
    \State \textbf{if} $a\geq1$ \textbf{then}
    \State \hspace{1.2em}\textbf{return} $\textsc{GammaCore}(a)$
    \State \textbf{return} $\textsc{GammaBelowOne}(a)$
    \Comment{Alternatively, $U_0^{1/a}\textsc{GammaCore}(a+1)$}
  \end{algorithmic}
\end{algorithm}

\begin{theorem}[Gamma]\label{thm:gamma}
  \Cref{alg:gamma-core} returns $\Gam(a)$ for every $a\geq1$, and
  \Cref{alg:gamma} does so for every $a>0$.
  Thus the $\Gam(a)$ law is 3-simple.
\end{theorem}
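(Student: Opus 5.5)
The plan is to verify the hypotheses of Lemma~\ref{lem:conditional-repair} for the pulled-back target $p=g_a/\Gamma(a)$ on $\R$, with proposal $Y=Z_2$ of density $q=\phi$, repair function $r=r_a/\Gamma(a)$, and acceptance variable $A$. Once that lemma applies, the output before the final transformation has density $g_a/\Gamma(a)$. The change of variables $x\mapsto h(x)=d(1+cx)^3$ is strictly increasing on $x>-1/c$, and $g_a=h'\,h^{a-1}e^{-h}$ there. Hence $h$ of this output has the $\Gam(a)$ density, which is the first claim.

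The verification proceeds in order.

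\textbf{Step 1.} Since $U_1,U_2$ are independent uniforms, Box--Muller gives $Z_1,Z_2$ i.i.d.\ $N(0,1)$. In particular $A$ is a function of $(Z_1,Z_2)$, and $U_3$ is independent of $(Z_1,Z_2)$.

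\textbf{Step 2.} Lemma~\ref{lem:gamma-envelope} gives $0\le A\le 1$ whenever $1+cZ_2>0$. Otherwise $A=0$, consistent with $g_a=0$ and $r_a=0$ there. For the latter, note that $r_a$ is supported on $|x|\le2$ and that $-1/c=-3\sqrt d\le -3\sqrt{2/3}<-2$ for $a\ge1$.

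\textbf{Step 3.} By independence of $Z_1$ and $Z_2$ together with \eqref{eq:gamma-weight-mean}, the conditional mean is \eqref{eq:gamma-conditional-mean}, which is exactly \eqref{eq:conditional-repair}.

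\textbf{Step 4.} The inequalities $0\le p-r\le q$ follow from Lemma~\ref{lem:gamma-envelope}. The bound $r_a\le g_a$ gives $p-r\ge0$. For $p-r\le q$, take expectations over $z=Z_1$ in the second inequality of Lemma~\ref{lem:gamma-envelope}:
\[
\E\widehat w_a(Z_1)\{g_a-r_a\}=(g_a-r_a)/\Gamma(a)\le\phi .
\]
We also need $\int r>0$. Here $\eta_a>0$ requires $\delta_a>0$, and $g_a(0)>0$. I would check $\delta_a>0$ directly: write $\log(d/a)=\log(1-\tfrac1{3a})\ge -\tfrac1{3a}-\tfrac{1}{9a^2(2-2/(3a))}$ or a similar elementary bound, and show $\tfrac13+(a-\tfrac12)\log(1-\tfrac1{3a})>0$ for $a\ge1$. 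The function behaves like $\tfrac1{6a}+O(a^{-2})$ for large $a$, and at $a=1$ it equals $\tfrac13+\tfrac12\log\tfrac23>0$. This is the only genuinely new computation, and I expect it to be the main small obstacle: a careful monotonicity or series argument in $u=1/(3a)\in(0,1/3]$.

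\textbf{Step 5.} The complement arm must draw from $r/\int r$, the triangular density on $[-2,2]$. A direct check shows that $T$ in \eqref{eq:triangle-inverse} is its inverse CDF: the CDF is $(x+2)^2/8$ for $x\le0$, and solving gives $2(\sqrt{2v}-1)$; the case $x>0$ is symmetric. The algorithm feeds $T$ the recycled $V=(U_3-A)/(1-A)$. Conditionally on $(Z_1,Z_2)$ and on $U_3>A$, $U_3$ is uniform on $(A,1)$, so $V$ is $\Unif(0,1)$ and independent of $(Z_1,Z_2)$. This is exactly the ``independent draw'' required by Lemma~\ref{lem:residual}. Since $\E A<1$ whenever $\int r>0$, $A<1$ almost surely on the relevant event, so the division is well defined. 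The case $A=0$ always routes to the complement branch, which matches the rule ``return $Y$ with probability $A$.''

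For \Cref{alg:gamma} with $0<a<1$, either cite the correctness of \textsc{GammaBelowOne} from \cite{GreavesBelowOne}, or use $U_0^{1/a}\Gam(a+1)\eqdist\Gam(a)$ together with the $a+1\ge1$ case. Finally, 3-simplicity holds because \Cref{alg:gamma-core} uses only $U_1,U_2,U_3$ and operations in $\mathcal F$, with shared nodes. Branching is expressible through indicators, $W_a$ is built from $\cot$, $\log$, $\sin$ and $\exp$, and the cutoff at $\theta\ge\pi$ is again an indicator. So the algorithm is an extended one-liner with three uniforms, and the below-one case uses three uniforms by \cite{GreavesBelowOne}.
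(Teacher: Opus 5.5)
Your argument is essentially the paper's: you apply Lemma~\ref{lem:conditional-repair} with $p=g_a/\Gamma(a)$, $q=\phi$, $r=r_a/\Gamma(a)$, use Lemma~\ref{lem:gamma-envelope} and \eqref{eq:gamma-conditional-mean}, pull back through $h$, and invoke \textsc{GammaBelowOne} for $0<a<1$; your extra checks (the triangular inverse CDF, the recycled uniform, and $\int r>0$) are all correct. The positivity $\delta_a>0$ that you flag as the main obstacle is already part of Lemma~\ref{lem:cubic}, and your proposed bound $\log(1-u)\ge -u-u^2/\{2(1-u)\}$ with $u=1/(3a)$ also gives $\delta_a\ge u(4-3u)/\{12(1-u)\}>0$ directly (though note $\delta_a\sim 1/(9a)$, not $1/(6a)$).
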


\begin{proof}
  Assume $a\geq1$.  In Lemma~\ref{lem:conditional-repair}, take $Y=Z_2$ and
  \[
    p(x)=\frac{g_a(x)}{\Gamma(a)},\qquad q(x)=\phi(x),\qquad
    r(x)=\frac{r_a(x)}{\Gamma(a)}.
  \]
  Lemma~\ref{lem:gamma-envelope} gives $0\leq A\leq1$, while
  \eqref{eq:gamma-conditional-mean} gives
  $\E[A\mid Y=x]=\{p(x)-r(x)\}/q(x)$.  On the complement
  branch, $T(V)$ has density proportional to $r$.  Hence
  $X\sim g_a/\Gamma(a)$, and $h(X)\sim\Gam(a)$ by \eqref{eq:gamma-pullback}.
  The claim follows by using the three-uniform gamma one-liner
  for $0<a<1$ from \cite{GreavesBelowOne}
  (\Cref{alg:gamma-below-one} in Appendix~\ref{app:gamma-below-one}) on the $a<1$ branch.
\end{proof}

\begin{remark}
  If evaluation of the gamma function is permitted, we can replace $\widehat w_a(Z_1)$ by
  $1/\Gamma(a)$ directly, still satisfying the assumptions of
  Lemma~\ref{lem:conditional-repair}.
\end{remark}

\section{Discrete Extended One-Liners}
\label{sec:discrete}

Both discrete one-liners use a special case of
Lemma~\ref{lem:residual}, where the acceptance subprobability $Q$ is set to equal the target
probability
mass function $(\pi_k)_{k\in\mathbb Z}$ exactly outside an excluded set of atoms $S$, and is 0
otherwise:
\begin{proposition}[Discrete repair]\label{prop:finite-gaussian-repair}
  Let $(\pi_k)_{k\in\mathbb Z}$ be the target probability mass function, and let
  $S\subseteq\mathbb Z$ with $\sum_{k\in S}\pi_k>0$.  Let
  $Z\sim N(0,1)$, let $T:\mathbb R\to\mathbb R$ be increasing, and set
  $Y=\lfloor T(Z)\rfloor$.  For every $k\notin S$ with $\pi_k>0$, define the
  interval $I_k=T^{-1}([k,k+1))$ and suppose $0<|I_k|<\infty$.  For each such
  $k$, let $\widehat\pi_k$ be a
  nonnegative random variable, independent of $Z$, satisfying
  \begin{equation}\label{eq:discrete-repair-pi-bound}
    \E\widehat\pi_k=\pi_k,
    \qquad
    \widehat\pi_k\leq |I_k|\inf_{w\in I_k}\phi(w)
    \quad\text{a.s.}
  \end{equation}
  Define
  \[
    A=
    \begin{cases}
      \dfrac{\widehat\pi_Y}{|I_Y|\phi(Z)},
      &Y\notin S,\ \pi_Y>0,\ |I_Y|>0,\\[2mm]
      0,&\text{otherwise}.
    \end{cases}
  \]
  Then $0\leq A\leq1$.  Return $Y$ with probability $A$; otherwise return an
  independent random variate with probability mass function proportional to
  $(\pi_k)_{k\in S}$.  Then the output is a random variate with probability mass
  function $(\pi_k)_{k\in\mathbb Z}$.
\end{proposition}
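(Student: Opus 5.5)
We will deduce the statement from Lemma~\ref{lem:residual}, applied with $\mu=(\pi_k)$ on $\mathbb Z$, with proposal $Y=\lfloor T(Z)\rfloor$ and with the random acceptance probability $A$. The plan has three parts: show $0\le A\le1$; compute the acceptance subprobability $Q$; and check that the complement distribution $R$ is the normalized restriction of $\pi$ to $S$.

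\emph{Step 1: $0\le A\le 1$.} Nonnegativity is immediate, since $\widehat\pi_k\ge0$, $\phi>0$ and $|I_k|>0$. For the upper bound, fix the event $Y=k$ with $k\notin S$ and $\pi_k>0$. Since $T$ is increasing, $\{Y=k\}=\{Z\in I_k\}$, so on this event $\phi(Z)\ge\inf_{w\in I_k}\phi(w)$. Combining this with \eqref{eq:discrete-repair-pi-bound} gives
\[
A=\frac{\widehat\pi_k}{|I_k|\phi(Z)}\le\frac{\widehat\pi_k}{|I_k|\inf_{I_k}\phi}\le1 .
\]
One small point needs care: we use that $I_k$ is an interval, because $T$ is monotone and so the preimage of an interval is an interval, possibly degenerate. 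We also use $|I_k|<\infty$, so that the infimum of $\phi$ over $I_k$ is positive.

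\emph{Step 2: computing $Q$.} For $k\notin S$ with $\pi_k>0$, use independence of $\widehat\pi_k$ and $Z$ together with the fact that $\{Y=k\}=\{Z\in I_k\}$:
\[
Q(\{k\})=\E\bigl[A\1_{\{Y=k\}}\bigr]
=\E\widehat\pi_k\cdot\E\Bigl[\frac{\1_{\{Z\in I_k\}}}{|I_k|\phi(Z)}\Bigr]
=\pi_k\int_{I_k}\frac{\phi(w)}{|I_k|\phi(w)}\dd w=\pi_k .
\]
Here $A$ involves $\widehat\pi_Y$, but on $\{Y=k\}$ it equals $\widehat\pi_k$, so the factorization is legitimate. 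For every other $k$ (that is, $k\in S$, $\pi_k=0$, or $k$ with $Y=k$ falling into the ``otherwise'' case), $A=0$ on $\{Y=k\}$, so $Q(\{k\})=0$.

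\emph{Step 3: the complement distribution.} From Step 2, $Q(\{k\})=\pi_k\1_{\{k\notin S\}}\le\pi_k$ for every $k$, and therefore $Q(B)\le\mu(B)$ for every set $B$. Summing over $k$ gives $\E A=\sum_Q=1-\sum_{k\in S}\pi_k<1$, using the hypothesis $\sum_{k\in S}\pi_k>0$. The complement distribution is then
\[
R(\{k\})=\frac{\pi_k\1_{\{k\in S\}}}{\sum_{j\in S}\pi_j},
\]
which is exactly the mass function proportional to $(\pi_k)_{k\in S}$. Lemma~\ref{lem:residual} now yields the conclusion.

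The main obstacle is minor. It is making the identification $\{Y=k\}=\{Z\in I_k\}$ and the independence-based factorization rigorous when $A$ depends on the random index $Y$. We handle this by conditioning on the event $\{Y=k\}$ and working one atom at a time, as in Step 2.
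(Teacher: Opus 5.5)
Your proposal is correct and follows the paper's proof essentially step for step. You bound $A$ using $\phi(Z)\geq\inf_{I_k}\phi$ on $\{Y=k\}=\{Z\in I_k\}$, compute $Q(\{k\})=\pi_k\1_{\{k\notin S\}}$ by independence of $\widehat\pi_k$ and $Z$, and invoke Lemma~\ref{lem:residual} with the normalized restriction of $\pi$ to $S$; your explicit checks that $Q\leq\mu$ and $\E A=1-\sum_{k\in S}\pi_k<1$ just spell out hypotheses of that lemma which the paper leaves implicit.
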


\begin{proof}
  The assumptions show that $0\leq A\leq1$.  For $k\notin S$ with $\pi_k>0$,
  by independence
  \[
    \begin{aligned}
      Q(\{k\})
      &=\E[A\1_{\{Y=k\}}]\\
      &=\int_{I_k}\phi(z)
      \frac{\E\widehat\pi_k}{|I_k|\phi(z)}\dd z
      =\pi_k.
    \end{aligned}
  \]
  For $k\in S$ or $\pi_k=0$, $Q(\{k\})=0$.  Thus
  $Q(\{k\})=\pi_k\1_{\{k\notin S\}}$. The complement
  arm distribution is
  \[
    R(\{k\})=
    \begin{cases}
      \displaystyle\frac{\pi_k}{\sum_{j\in S}\pi_j},&k\in S,\\[2mm]
      0,&k\notin S,
    \end{cases}
  \]
  hence by Lemma~\ref{lem:residual}, the output has the claimed distribution.

\end{proof}

In the constructions below, $S$  has at most three atoms for Poisson and five for binomial,
so the complement arm distribution $R$ can be sampled from by inversion as a
one-liner: for a probability vector $\mathbf{p}=(p_0,\dots,p_m)$ and nonnegative weights
$\mathbf{w}=(w_0,\ldots,w_m)$ with $W=\sum_{j=0}^m w_j>0$ and $w_j\propto p_j$, define
\begin{equation}
  \Cat(v;\mathbf{w})
  =\sum_{j=0}^{m-1}
  \1_{\left\{v\geq (w_0+\cdots+w_j)/W\right\}},
  \label{eq:categorical-inversion}
\end{equation}
then for $V\sim\Unif(0,1)$, we have
\[
  \Pp\{\Cat(V;\mathbf{w})=j\}=p_j,
  \qquad j=0,\ldots,m.
\]
As $w_j$ need only be proportional to $p_j$, we can avoid computing $p_j$ directly;
for Poisson and binomial, we take advantage of recurrences these probabilities satisfy.

The proposal transformations used below are motivated by the fact that, away from endpoints,
\[
  \widehat\pi_k\leq M(k):=
  \frac{e^{-D(k)}}{\sqrt{2\pi V(k)}},
  \qquad V(k)>0,
\]
where
\[
  D(t)=
  \begin{cases}
    D_{\mathrm P}(t\Vert\lambda),&\text{Poisson},\\
    N D_{\mathrm B}(t/N\Vert p),&\text{binomial},
  \end{cases}
  \qquad
  V(t)=
  \begin{cases}
    t,&\text{Poisson},\\
    t(N-t)/N,&\text{binomial},
  \end{cases}
\]
and $D_{\mathrm P}$ and $D_{\mathrm B}$ are the Poisson and Bernoulli
Kullback--Leibler divergences
\[
  D_{\mathrm P}(t\Vert\lambda)
  =t\log\frac{t}{\lambda}-t+\lambda,
  \qquad
  D_{\mathrm B}(x\Vert p)
  =x\log\frac{x}{p}+(1-x)\log\frac{1-x}{1-p},
\]
with $0\log0=0$.
In order to show that
\begin{equation}\label{eq:discrete-envelope-bound}
  \widehat\pi_k\leq M(k)
  \leq |I_k|\inf_{z\in I_k}\phi(z),\qquad k\notin S,
\end{equation}
we choose an increasing coordinate function $\rho$, with elementary inverse, satisfying
\[
  D(t)\geq\frac12\rho(t)^2,
\]
so that
\[
  \phi(\rho(t))\geq\frac{e^{-D(t)}}{\sqrt{2\pi}},\quad
\]
allowing us to bound the density term in \eqref{eq:discrete-envelope-bound} for proposals based on
$\rho^{-1}$. After an appropriate piecewise constant offset to $\rho^{-1}$, the inequality
\eqref{eq:discrete-envelope-bound} holds, which we prove in
Appendix~\ref{app:envelope-proofs}.
\subsection{Poisson}
\label{sec:poisson}

Fix $\lambda>0$.  Let
\[
  U_1,U_2,U_3\overset{\mathrm{i.i.d.}}{\sim}\Unif(0,1)
\]
and let $(Z_1,Z_2)$ be the corresponding Box--Muller pair.  Set
$\widehat\pi_0(z)=e^{-\lambda}$.  For $k\geq1$, set
$\theta=|z|/\sqrt k$ and define
\begin{equation}
  \widehat\pi_k(z)
  =e^{-\lambda}\lambda^k W_k(\theta) =
  \begin{cases}
    \dfrac{\exp\left[-D_{\mathrm P}(k\Vert\lambda)
    -k\left\{\Phi(\theta)-\theta^2/2\right\}\right]}{\sqrt{2\pi k}},
    &\theta<\pi,\\[2mm]
    0,&\theta\geq\pi.
  \end{cases}
  \label{eq:poisson-random-mass}
\end{equation}
By Lemma~\ref{lem:reciprocal-gamma-weight}, we have
\begin{equation}
  \E\widehat\pi_k(Z_2)
  =e^{-\lambda}\frac{\lambda^k}{k!}
  =\Pp\{\Pois(\lambda)=k\}.
  \label{eq:poisson-unbiasedness}
\end{equation}
Define the coordinate $\rho$ and its inverse by
\begin{align}
  \rho(t)
  &=\begin{cases}
    (t-\lambda)/\sqrt\lambda,&t\leq\lambda,\\[1mm]
    2(\sqrt t-\sqrt\lambda),&t\geq\lambda,
  \end{cases} \label{eq:poisson-rho}\\
  \rho^{-1}(z)
  &=\begin{cases}
    \lambda+\sqrt\lambda z,&z<0,\\[1mm]
    (\sqrt\lambda+z/2)^2,&z\geq0.
  \end{cases} \label{eq:poisson-rho-inverse}
\end{align}
Let
\[
  s=\max\{0,\lfloor\lambda\rfloor-1\},
  \qquad S=\{s,s+1,s+2\},
\]
and define
\[
  T(z)=\begin{cases}
    \rho^{-1}(z)-1,&z<0,\\
    \rho^{-1}(z)+1,&z\geq0,
  \end{cases}
  \qquad Y=\lfloor T(Z_1)\rfloor.
\]
For $k\geq0$ with $k\notin S$, set $I_k=T^{-1}([k,k+1))$:
\begin{equation}
  I_k=
  \begin{cases}
    [\rho(k+1),\rho(k+2)),&k+2\leq\lambda,\\[1mm]
    [\rho(k-1),\rho(k)),&k-1\geq\lambda.
  \end{cases} \label{eq:poisson-cells}
\end{equation}

Define
\[
  A=
  \begin{cases}
    \displaystyle
    \frac{\widehat\pi_Y(Z_2)}{|I_Y|\phi(Z_1)},
    &Y\geq0,\ Y\notin S,\\[2mm]
    0,&\text{otherwise}.
  \end{cases}
\]
\begin{lemma}[Poisson envelope]\label{lem:poisson-envelope}
  For $k\geq0$ with $k\notin S$, and for $z\in\R$,
  \begin{equation}\label{eq:poisson-envelope}
    \widehat\pi_k(z)
    \leq |I_k|\inf_{w\in I_k}\phi(w).
  \end{equation}
\end{lemma}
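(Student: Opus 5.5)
The plan is to split \eqref{eq:poisson-envelope} into the two inequalities in \eqref{eq:discrete-envelope-bound}: first $\widehat\pi_k(z)\leq M(k)=e^{-D_{\mathrm P}(k\Vert\lambda)}/\sqrt{2\pi k}$, then $M(k)\leq|I_k|\inf_{I_k}\phi$.

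\textbf{First inequality.} For $k\geq1$ the first inequality follows from \eqref{eq:poisson-random-mass} once we know $\Phi(\theta)\geq\theta^2/2$ on $[0,\pi)$. I would check this by Taylor expansion. We have $1-\theta\cot\theta=\theta^2/3+\theta^4/45+\cdots$ and $\log(\theta/\sin\theta)=\theta^2/6+\theta^4/180+\cdots$. All coefficients in both series are nonnegative, since both are built from the even power series of $\cot$ with positive Bernoulli-type coefficients. Hence $\Phi(\theta)-\theta^2/2$ is a nonnegative-coefficient series. Alternatively, note that $\Phi'(\theta)-\theta=\theta(\csc^2\theta-1)+\cot\theta\cdot(\ldots)$, which is nonnegative on $(0,\pi)$. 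This step is also implicit in the bound on $W_a$ in Lemma~\ref{lem:reciprocal-gamma-weight}. For $\theta\geq\pi$ we have $\widehat\pi_k=0$, so there is nothing to show.

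\textbf{Right cells, $k-1\geq\lambda$.} By \eqref{eq:poisson-cells}, $I_k=[\rho(k-1),\rho(k))\subset[0,\infty)$. So $\phi$ is decreasing there and its infimum is $\phi(\rho(k))$. The length is $|I_k|=2(\sqrt k-\sqrt{k-1})=2/(\sqrt k+\sqrt{k-1})\geq1/\sqrt k$. It therefore suffices to prove the Hellinger-type bound
\[
D_{\mathrm P}(t\Vert\lambda)\geq 2(\sqrt t-\sqrt\lambda)^2=\tfrac12\rho(t)^2,\qquad t\geq\lambda.
\]
To prove it, set $g(t)=D_{\mathrm P}(t\Vert\lambda)-2(\sqrt t-\sqrt\lambda)^2$. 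Then $g(\lambda)=0$ and $g'(t)=\log(t/\lambda)-2+2\sqrt{\lambda/t}=-2\log s-2+2s$ with $s=\sqrt{\lambda/t}\leq1$. This is $\geq0$ because $\log s\leq s-1$. Combining the three facts gives $M(k)\leq\phi(\rho(k))/\sqrt k\leq|I_k|\inf_{I_k}\phi$.

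\textbf{Left cells, $k+2\leq\lambda$.} Here $\rho$ is linear on $(-\infty,\lambda]$, so $|I_k|=1/\sqrt\lambda$. The interval lies in $(-\infty,0]$, so the infimum of $\phi$ is attained at the left endpoint, $\phi(\rho(k+1))=e^{-(\lambda-k-1)^2/(2\lambda)}/\sqrt{2\pi}$.

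For $k=0$ (which arises only when $\lambda\geq2$), we have $\widehat\pi_0=e^{-\lambda}$. The required inequality reduces to $\lambda/2+1-1/(2\lambda)\geq\tfrac12\log(2\pi\lambda)$, which is elementary for $\lambda\geq2$.

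For $1\leq k\leq\lambda-2$, I need
\[
f(t):=D_{\mathrm P}(t\Vert\lambda)-\frac{(\lambda-t-1)^2}{2\lambda}-\frac12\log\frac\lambda t\geq0,\qquad 1\leq t\leq\lambda-2.
\]
This is the step I expect to be the main obstacle. The quadratic lower bound $D_{\mathrm P}\geq(\lambda-t)^2/(2\lambda)$ only gains $(2(\lambda-t)-1)/(2\lambda)$, which is too weak against $\tfrac12\log(\lambda/t)$ when $t$ is small relative to $\lambda$. I would therefore keep the exact $D_{\mathrm P}$ and analyse
\[
f'(t)=\log\frac t\lambda+\frac{\lambda-t-1}{\lambda}+\frac1{2t}.
\]
The plan is to show $f$ is decreasing on the range, using $\log(t/\lambda)\leq t/\lambda-1$ so that $f'(t)\leq 1/(2t)-1/\lambda$. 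This is negative only for $t>\lambda/2$, so I would handle the two regimes separately:
\begin{itemize}
\item For $t\geq\lambda/2$, check $f'\leq0$ and then the endpoint value $f(\lambda-2)\geq0$. That endpoint value is $D_{\mathrm P}(\lambda-2\Vert\lambda)-1/(2\lambda)+\tfrac12\log(1-2/\lambda)$, which should be verified directly using $D_{\mathrm P}(\lambda-2\Vert\lambda)\geq 2/\lambda+4/(3\lambda^2)$.
\item For $t\leq\lambda/2$, use that $D_{\mathrm P}$ gains the cubic term, $D_{\mathrm P}(t\Vert\lambda)-(\lambda-t)^2/(2\lambda)\geq(\lambda-t)^3/(6\lambda^2)$, which dominates $\tfrac12\log(\lambda/t)\leq\tfrac12\log\lambda$ there.
\end{itemize}
If this case split proves delicate near small $\lambda$, I would fall back on checking the finitely many small-$\lambda$ configurations directly.
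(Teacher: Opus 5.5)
Your bound $\widehat\pi_k\le M(k)$, the right cells $k-1\ge\lambda$, and the case $k=0$ are fine and agree with the paper.

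\textbf{The gap.} It is in the left cells $1\le k\le\lambda-2$, in your regime $t\le\lambda/2$. The claim that the cubic gain $(\lambda-t)^3/(6\lambda^2)$ ``dominates $\tfrac12\log(\lambda/t)\le\tfrac12\log\lambda$'' is false as stated. For $\lambda\ge4$ the cubic gain on this regime is smallest at $t=\lambda/2$, where it equals $\lambda/48$. But $\lambda/48<\tfrac12\log\lambda$ for every $\lambda<100$. Even against the exact value $\tfrac12\log 2$ at that point, it fails for $4\le\lambda<16$. Concretely, at $\lambda=3$, $t=1$ the cubic gain is $8/54\approx0.15$, while $\tfrac12\log3\approx0.55$.

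Bounding the two terms separately loses too much. The cubic term is large only where $t$ is small, which is also where $\log(\lambda/t)$ is large. Your fallback of checking ``finitely many small-$\lambda$ configurations'' does not rescue this. $\lambda$ ranges over a continuum, and the failure of the bound is not confined to small $\lambda$.

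\textbf{Repairing your route.} The inequality $f\ge0$ is true, and your approach works if you also keep the linear gain. Use
\[
f(t)\ge F(t):=\frac{(\lambda-t)^3}{6\lambda^2}+\frac{2(\lambda-t)-1}{2\lambda}-\frac12\log\frac{\lambda}{t}.
\]
With $t=\lambda\tau$ one has $2\lambda\tau F'(t)=1-2\tau-\lambda\tau(1-\tau)^2$. This is negative on $[1/\lambda,1/2]$ when $\lambda\ge3$. So $F$ decreases on $[1,\lambda/2]$, and
\[
F\ge F(\lambda/2)=\frac{\lambda}{48}+\frac12-\frac1{2\lambda}-\frac12\log2>0 .
\]

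\textbf{The paper's route.} The paper avoids this analysis altogether. Instead of evaluating $\phi(\rho(k+1))$ exactly, it applies $D\ge\tfrac12\rho^2$ at $k+1$. The ratio is then at least
\[
\sqrt{k/\lambda}\,e^{D(k)-D(k+1)}=e\sqrt{k\lambda}\,\frac{k^k}{(k+1)^{k+1}} .
\]
The exact difference $D(k)-D(k+1)$ contributes a full $\log\lambda$, which beats $\tfrac12\log(\lambda/k)$. Then $\lambda\ge k+2$ and the monotonicity of $A(t)$ give the lower bound $eA_k>1$ in one line.
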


\begin{proof}
  See Appendix~\ref{app:poisson-envelope}.
\end{proof}

\begin{algorithm}[H]
  \caption{$\Pois(\lambda)$ sampler}
  \label{alg:poisson}
  \begin{algorithmic}[1]
    \Require $\lambda>0$
    \State $U_1,U_2,U_3\overset{\mathrm{i.i.d.}}{\sim}\Unif(0,1)$
    \State $R\gets\sqrt{-2\log U_1}$, $Z_1\gets R\cos(2\pi U_2)$, $Z_2\gets R\sin(2\pi U_2)$
    \State $s\gets\max\{0,\lfloor\lambda\rfloor-1\}$, $t\gets s+2$
    \State $K\gets\lfloor\rho^{-1}(Z_1)-1\rfloor$ \textbf{if} $Z_1<0$, \textbf{else} $K\gets\lfloor\rho^{-1}(Z_1)+1\rfloor$
    \State $A\gets0$
    \State \textbf{if} $K\geq0$ and $(K<s$ or $K>t)$ \textbf{then}
    \State \hspace{1.2em}$L\gets1/\sqrt\lambda$ \textbf{if} $K+2\leq\lambda$, \textbf{else} $L\gets2/(\sqrt K+\sqrt{K-1})$
    \State \hspace{1.2em}$A\gets\widehat\pi_K(Z_2)/(L\phi(Z_1))$
    \State \textbf{if} $A>0$ and $U_3<A$ \textbf{then}
    \State \hspace{1.2em}\textbf{return} $K$
    \State $V\gets(U_3-A)/(1-A)$
    \State $w_0\gets1$, $w_1\gets\lambda/(s+1)$, $w_2\gets\lambda^2/\{(s+1)(s+2)\}$
    \State \textbf{return} $s+\Cat(V;w_0,w_1,w_2)$
  \end{algorithmic}
\end{algorithm}

\begin{theorem}[Poisson]\label{thm:poisson}
  For every $\lambda>0$, \Cref{alg:poisson} returns a $\Pois(\lambda)$ variate.
  Thus the $\Pois(\lambda)$ law is 3-simple.
\end{theorem}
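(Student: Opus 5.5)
The plan is to verify that \Cref{alg:poisson} is an instance of Proposition~\ref{prop:finite-gaussian-repair}, applied with $\pi_k=\Pp\{\Pois(\lambda)=k\}$ for $k\ge0$ (and $\pi_k=0$ for $k<0$), $S=\{s,s+1,s+2\}$, $Z=Z_1$, the increasing map $T$ defined above, and random masses $\widehat\pi_k=\widehat\pi_k(Z_2)$. Since $(Z_1,Z_2)$ is a Box--Muller pair, $Z_2$ is independent of $Z_1$, so each $\widehat\pi_k$ is independent of $Z$, as the proposition requires.

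Next I check the hypotheses of the proposition one at a time.
\begin{itemize}
\item $\sum_{k\in S}\pi_k>0$ holds because every Poisson atom is positive.
\item $T$ is strictly increasing. Both branches of $\rho^{-1}$ are increasing, and the jump at $0$ goes from $\lambda-1$ (left limit) to $\lambda+1$ (at $z=0$), so $T$ stays increasing across $z=0$.
\item The cells $I_k$ for $k\notin S$, $k\ge0$, are given by \eqref{eq:poisson-cells}. For this I check that every $k\notin S$ satisfies either $k+2\le\lambda$ or $k-1\ge\lambda$. If $k<s=\lfloor\lambda\rfloor-1$, then $k+2\le\lfloor\lambda\rfloor\le\lambda$. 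If $k>s+2$, then $k-1\ge s+2$, and $s+2\ge\lfloor\lambda\rfloor+1>\lambda$ when $s=\lfloor\lambda\rfloor-1$, while $s+2=2>\lambda$ when $s=0$ (which occurs for $\lambda<2$).
\item Each cell is nondegenerate and bounded: $0<|I_k|<\infty$ because $\rho$ is strictly increasing and finite on $[0,\infty)$.
\item The lengths match the algorithm. On the left, $\rho$ is linear with slope $1/\sqrt\lambda$, so $|I_k|=1/\sqrt\lambda$. On the right, $|I_k|=2(\sqrt k-\sqrt{k-1})=2/(\sqrt k+\sqrt{k-1})$. So $L=|I_K|$ in the algorithm.
\item Unbiasedness $\E\widehat\pi_k=\pi_k$ is \eqref{eq:poisson-unbiasedness}, which follows from Lemma~\ref{lem:reciprocal-gamma-weight}. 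For $k=0$ it is trivial since $\widehat\pi_0=e^{-\lambda}$.
\item The almost-sure bound in \eqref{eq:discrete-repair-pi-bound} is exactly Lemma~\ref{lem:poisson-envelope}.
\end{itemize}

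Then I reconcile the algorithm's $A$ with the proposition's $A$. For $K<0$ we have $\pi_K=0$, so the proposition sets $A=0$, matching the algorithm. For $K\ge0$ with $K\notin S$, the two expressions coincide. The proposition then gives $0\le A\le1$.

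On the complement branch, uniform recycling makes $V=(U_3-A)/(1-A)$ conditionally $\Unif(0,1)$ given $U_3>A$ and independent of $(Z_1,Z_2)$, as in \cite[Section~3.7]{Devroye1986}. This requires $A<1$, which holds almost surely on that branch. The weights satisfy $w_j\propto\pi_{s+j}$ by the recurrence $\pi_{k+1}/\pi_k=\lambda/(k+1)$. So \eqref{eq:categorical-inversion} samples $R$, and the proposition yields a $\Pois(\lambda)$ output.

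Finally, every step uses only operations in $\mathcal F$, with branching realized through indicators. In particular $\widehat\pi_k$ uses $\log$, $\cot$ and $\exp$, and $\pi$ is a constant. Only $U_1,U_2,U_3$ are consumed, which gives 3-simplicity.

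The substantive obstacle is the envelope Lemma~\ref{lem:poisson-envelope}, but it is given. Within this proof, the delicate points are the case analysis showing that every $k\notin S$ falls into one of the two cell regimes, especially for small $\lambda$ where $s=0$, and the justification of the recycled uniform.
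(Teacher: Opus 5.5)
Your proposal is correct and takes the same approach as the paper: you apply Proposition~\ref{prop:finite-gaussian-repair} with unbiasedness from \eqref{eq:poisson-unbiasedness}, the almost-sure bound from Lemma~\ref{lem:poisson-envelope}, and the recurrence $\pi_{k+1}/\pi_k=\lambda/(k+1)$ to get the complement weights. You also verify details the paper asserts without proof: that $T$ is monotone across $z=0$, that every $k\notin S$ falls into one of the two cell regimes (including $s=0$ for $\lambda<2$), that $L=|I_K|$, and that the recycled uniform $V$ is valid.
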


\begin{proof}
  Let $\pi_k=\Pp\{\Pois(\lambda)=k\}$.  Apply
  Proposition~\ref{prop:finite-gaussian-repair}, where the transformation $T$
  satisfies the required hypotheses, $\widehat\pi_k(Z_2)$ is unbiased by
  \eqref{eq:poisson-unbiasedness}, and the inequality
  \eqref{eq:discrete-repair-pi-bound} is given by
  \eqref{eq:poisson-envelope}.  Finally,
  $w_0,w_1,w_2$ are proportional to $\pi_s,\pi_{s+1},\pi_{s+2}$, so the complement arm has the required distribution.
\end{proof}

\subsection{Binomial}
\label{sec:binomial}

For binomial, fix an integer $N\geq1$ and $0<p<1$, and write $q=1-p$.  Let
\[
  U_1,U_2,U_3\overset{\mathrm{i.i.d.}}{\sim}\Unif(0,1)
\]
and let $(Z_1,Z_2)$ be the corresponding Box--Muller pair.  For $0<k<N$,
write $x=k/N$
and $y=1-x$.  Define
\begin{equation}
  \widehat\pi_k(z)=
  \begin{cases}
    q^N,&k=0,\\
    e^{-N D_{\mathrm B}(k/N\Vert p)}W_{N,k}(|z|/\sqrt{Nxy}),&0<k<N,\\
    p^N,&k=N.
  \end{cases} \label{eq:binomial-random-mass}
\end{equation}
By Lemma~\ref{lem:binomial-weight} for $0<k<N$, with $k=0,N$ immediate by definition,
\begin{equation}
  \E\widehat\pi_k(Z_2)
  =\binom Nk p^kq^{N-k}
  =\Pp\{\Bin(N,p)=k\}.
  \label{eq:binomial-unbiasedness}
\end{equation}

Define
\begin{align}
  \rho(u)
  &=\begin{cases}
    \dfrac{2}{\sqrt p}\{\sqrt{Nq}-\sqrt{N-u}\},&u\leq Np,\\[2mm]
    \dfrac{2}{\sqrt q}\{\sqrt u-\sqrt{Np}\},&u\geq Np,
  \end{cases} \label{eq:binomial-rho}\\
  \rho^{-1}(z)
  &=\begin{cases}
    N-(\sqrt{Nq}-\tfrac12\sqrt p\,z)^2,&z<0,\\[1mm]
    (\sqrt{Np}+\tfrac12\sqrt q\,z)^2,&z\geq0.
  \end{cases} \label{eq:binomial-rho-inverse}
\end{align}
Let
\begin{equation}
  s=\max\{0,\min\{N-4,\lfloor Np\rfloor-2\}\},
  \qquad t=\min\{N,s+4\},
  \qquad S=\{s,\ldots,t\}. \label{eq:binomial-repair-set}
\end{equation}
Define
\[
  T(z)=\begin{cases}
    \rho^{-1}(z)-1,&z<0,\\
    \rho^{-1}(z)+2,&z\geq0,
  \end{cases}
  \qquad Y=\lfloor T(Z_1)\rfloor.
\]
For $k\in \{0,\dots, N\}\setminus S$, set $I_k=T^{-1}([k,k+1))$:
\begin{equation}
  I_k=
  \begin{cases}
    [\rho(k+1),\rho(k+2)),&k+2\leq Np,\\[1mm]
    [\rho(k-2),\rho(k-1)),&k-2\geq Np.
  \end{cases} \label{eq:binomial-cells}
\end{equation}

Define
\[
  A=
  \begin{cases}
    \displaystyle
    \frac{\widehat\pi_Y(Z_2)}{|I_Y|\phi(Z_1)},
    &Y\in\{0,\ldots,N\}\setminus S,\\[2mm]
    0,&\text{otherwise}.
  \end{cases}
\]
\begin{lemma}[Binomial envelope]\label{lem:binomial-envelope}
  For $k\in\{0,\ldots,N\}\setminus S$ and $z\in\R$,
  \begin{equation}\label{eq:binomial-envelope}
    \widehat\pi_k(z)
    \leq |I_k|\inf_{w\in I_k}\phi(w).
  \end{equation}
\end{lemma}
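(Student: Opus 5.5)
We follow the strategy sketched before the Poisson subsection: prove the chain
\[
  \widehat\pi_k(z)\leq M(k)=\frac{e^{-D(k)}}{\sqrt{2\pi V(k)}}\leq |I_k|\inf_{w\in I_k}\phi(w),
\]
with $D(t)=ND_{\mathrm B}(t/N\Vert p)$ and $V(t)=t(N-t)/N$, treating the endpoints $k=0,N$ separately.

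\emph{Step 1 (first inequality).} For $0<k<N$, the bound $W_{N,k}\leq 1/\sqrt{2\pi Nxy}$ of Lemma~\ref{lem:binomial-weight} gives $\widehat\pi_k(z)\leq e^{-D(k)}/\sqrt{2\pi V(k)}=M(k)$, since $Nxy=V(k)$. For $k=0$ we have $\widehat\pi_0=q^N=e^{-D(0)}$, and similarly $\widehat\pi_N=p^N=e^{-D(N)}$. Here $V=0$, so a direct argument is needed; this is deferred to Step 4.

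\emph{Step 2 (Gaussian bound via $\rho$).} We show $D(u)\geq\tfrac12\rho(u)^2$ on $[0,N]$. Take the branch $u\geq Np$ and substitute $u=Nx$. The claim becomes
\[
  x\log\frac xp+(1-x)\log\frac{1-x}{q}\geq \frac{2}{q}\bigl(\sqrt x-\sqrt p\bigr)^2.
\]
Both sides vanish at $x=p$, so it suffices to compare derivatives. This reduces to $\log\frac{xq}{p(1-x)}\geq\frac{2}{q}\bigl(1-\sqrt{p/x}\bigr)$. One route is $\log\frac xp\geq 2(1-\sqrt{p/x})$, together with $-\log\frac{1-x}{q}\geq$ a matching term that absorbs the factor $1/q$. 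The branch $u\leq Np$ is symmetric under $u\mapsto N-u$, $p\leftrightarrow q$. Consequently, for $w\in I_k$ we have $|w|\leq|\rho(\text{far endpoint})|$, and since $D$ is monotone on each side of $Np$ this yields a lower bound on $\phi(w)$ by $e^{-D(j)}/\sqrt{2\pi}$, where $j$ is the endpoint of the cell farther from $Np$.

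\emph{Step 3 (cell length versus offset).} Take $k-2\geq Np$. Then $I_k=[\rho(k-2),\rho(k-1))$ and
\[
  |I_k|=\frac{2}{\sqrt q}\bigl(\sqrt{k-1}-\sqrt{k-2}\bigr)=\frac{2}{\sqrt q\,(\sqrt{k-1}+\sqrt{k-2})},
\]
while $\inf_{I_k}\phi\geq e^{-D(k-1)}/\sqrt{2\pi}$. We therefore need
\[
  \frac{2}{\sqrt q\,(\sqrt{k-1}+\sqrt{k-2})}\,e^{-D(k-1)}\geq\frac{e^{-D(k)}}{\sqrt{k(N-k)/N}}.
\]
\begin{itemize}
  \item Use $D(k)-D(k-1)\geq D'(k-1)=\log\frac{(k-1)q}{(N-k+1)p}\geq0$ by convexity.
  \item Use the crude bound $\sqrt{k-1}+\sqrt{k-2}\leq 2\sqrt k$.
\end{itemize}
It then suffices that $e^{D(k)-D(k-1)}\sqrt{N-k}\geq\sqrt{qN}$, with the offset $+2$ supplying the slack. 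This left-tail piece is the main obstacle, because $\sqrt{N-k}$ can be much smaller than $\sqrt{Nq}$ when $k$ is near $N$. It must be compensated by the growth $e^{D'(k-1)}\geq \frac{(k-1)q}{(N-k+1)p}$. I would first try splitting into $N-k\geq Nq/4$ (easy) and $N-k<Nq/4$, where $D'$ is large. The case $k+2\leq Np$ is the mirror image, with offset $-1$ and the $\sqrt{p}$ branch. The asymmetric offsets ($-1$ versus $+2$) must be handled separately, using the less generous shift on the left. The choice of $S$ guarantees $k\geq Np+2$ or $k\leq Np-2$ outside $S$, which Step 3 uses.

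\emph{Step 4 (endpoints).} For $k=N\notin S$ (so $N-2\geq Np$), the cell is $[\rho(N-2),\rho(N-1))$. We need $p^N\leq|I_N|\,e^{-D(N-1)}/\sqrt{2\pi}$. Since $e^{-D(N-1)}=Np^{N-1}q\,(1-1/N)^{-(N-1)}$-type expressions dominate $p^N$ by the factor $\approx Nq/p$, a direct computation closes this case. The case $k=0$ is analogous, with cell $[\rho(1),\rho(2))$.
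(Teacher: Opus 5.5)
Your Steps 1, 2 and 4 follow the paper's structure: the chain $\widehat\pi_k\le M(k)\le|I_k|\inf_{I_k}\phi$, the bound $D\ge\frac12\rho^2$ with $\phi$ evaluated at the cell endpoint farther from $Np$, and a direct computation at $k\in\{0,N\}$. Your derivative route in Step 2 does close, e.g.\ via $\log\frac{q}{1-x}\ge\frac{x-p}{q}\ge\frac{2p}{q}(1-\sqrt{p/x})$. Your reduction of the right tail to $e^{D(k)-D(k-1)}\sqrt{N-k}\ge\sqrt{Nq}$ is also valid.

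The gap is that this reduced inequality is the whole content of the lemma, you do not prove it, and the tools you suggest for it fail. The tangent bound $D(k)-D(k-1)\ge D'(k-1)$ throws away the factor $(1+\frac1{k-1})^k(\frac{m}{m+1})^m$, where $m=N-k$ and $\mu+2\le k\le N-1$. This factor is about $e/2$ at $m=1$, which is more than the slack available there. For instance, take $N=1000$, $Np=996.999$ and $k=999\notin S$. The tangent bound gives only $\frac{(k-1)q}{(N-k+1)p}\approx1.50<\sqrt{Nq}\approx1.73$, while the true left side is about $2.04$. This $k$ satisfies $N-k\ge Nq/4$, so it lies in the region you call easy. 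That region in fact contains the tight cases: $p=\frac12$ and $k=\lfloor\mu\rfloor+3$ give $e^{D(k)-D(k-1)}=1+O(1/N)$ and $\sqrt{(N-k)/(Nq)}=1-O(1/N)$, so knowing $\sqrt{N-k}\ge\frac12\sqrt{Nq}$ gives nothing. The slack is only of relative order $1/N$ in the bulk and a fixed small constant near $k=N-1$, so no lossy estimate of the increment survives.

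What is missing is the exact increment. From $D(t)=d(t,\mu)+d(N-t,N-\mu)$ with $\mu=Np$,
\[
  e^{D(k)-D(k-1)}=\frac{k^k}{(k-1)^{k-1}\mu}\cdot\frac{m^m(N-\mu)}{(m+1)^{m+1}}.
\]
Your reduced inequality then becomes $\frac{k^k}{(k-1)^{k-1}\mu}\cdot\frac{m^{m+1/2}\sqrt{N-\mu}}{(m+1)^{m+1}}\ge1$. With $i=k-2$, its left side is at least $A_mB_i\cdot\frac{i}{\mu}\sqrt{(N-\mu)/(m+2)}$, where $A_m=\frac{\sqrt{m(m+2)}}{m+1}(\frac{m}{m+1})^m$ and $B_i=\sqrt{\frac{i+2}{i}}(\frac{i+2}{i+1})^{i+2}$. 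The last two factors are at least $1$ precisely because $k\ge\mu+2$. The proof then finishes with the sharp facts $A_m>e^{-1}$ (from the monotonicity of $A(t)$) and $B_i>(1+\frac1{i+1})^{i+2}>e$. This is the paper's argument, written there for the left tail.

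Also, the offsets $-1$ and $+2$ are not really asymmetric. Because of the floor, the cells correspond to $\rho^{-1}(z)\in[k+1,k+2)$ on the left and $\rho^{-1}(z)\in[k-2,k-1)$ on the right, i.e.\ one to two units from $k$ in both cases. The map $k\mapsto N-k$, $p\leftrightarrow q$ carries one tail's cells onto the other's, so a single tail suffices and there is no ``less generous'' side.
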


\begin{proof}
  See Appendix~\ref{app:binomial-envelope}.
\end{proof}

\begin{algorithm}[H]
  \caption{$\Bin(N,p)$ sampler}
  \label{alg:binomial}
  \begin{algorithmic}[1]
    \Require integer $N\geq1$ and $0\leq p\leq1$
    \State \textbf{if} $p=0$ \textbf{then return} $0$; \textbf{if} $p=1$ \textbf{then return} $N$
    \State $q\gets1-p$, $s\gets\max\{0,\min\{N-4,\lfloor Np\rfloor-2\}\}$, $t\gets\min\{N,s+4\}$
    \State $U_1,U_2,U_3\overset{\mathrm{i.i.d.}}{\sim}\Unif(0,1)$
    \State $R\gets\sqrt{-2\log U_1}$, $Z_1\gets R\cos(2\pi U_2)$, $Z_2\gets R\sin(2\pi U_2)$
    \State $K\gets\lfloor\rho^{-1}(Z_1)-1\rfloor$ \textbf{if} $Z_1<0$
    \State \hspace{1.2em}\textbf{else} $K\gets\lfloor\rho^{-1}(Z_1)+2\rfloor$
    \State $A\gets0$
    \State \textbf{if} $0\leq K\leq N$ and $(K<s$ or $K>t)$ \textbf{then}
    \State \hspace{1.2em}$L\gets\frac{2}{\sqrt p}(\sqrt{N-K-1}+\sqrt{N-K-2})^{-1}$ \textbf{if} $K+2\leq Np$
    \State \hspace{2.4em}\textbf{else} $L\gets\frac{2}{\sqrt q}(\sqrt{K-1}+\sqrt{K-2})^{-1}$
    \State \hspace{1.2em}$A\gets\widehat\pi_K(Z_2)/(L\phi(Z_1))$
    \State \textbf{if} $A>0$ and $U_3<A$ \textbf{then return} $K$
    \State $V\gets(U_3-A)/(1-A)$, $w_0\gets1$
    \For{$j=0,1,2,3$}
      \State $w_{j+1}\gets w_j\dfrac{N-s-j}{s+j+1}\dfrac pq$ \textbf{if} $s+j<t$
      \State $w_{j+1}\gets0$ \textbf{otherwise}
    \EndFor
    \State \textbf{return} $s+\Cat(V;w_0,w_1,w_2,w_3,w_4)$
  \end{algorithmic}
\end{algorithm}

\begin{theorem}[Binomial]\label{thm:binomial}
  For every integer $N\geq1$ and $p\in[0,1]$, \Cref{alg:binomial} returns a
  $\Bin(N,p)$ variate.  Thus the $\Bin(N,p)$ law is 3-simple.
\end{theorem}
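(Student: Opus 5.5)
The degenerate cases come first. If $p=0$ or $p=1$, the algorithm returns the constant $0$ or $N$, and this is exactly $\Bin(N,0)$ or $\Bin(N,1)$. So assume $0<p<1$ and let $\pi_k=\Pp\{\Bin(N,p)=k\}$ for $0\le k\le N$, with $\pi_k=0$ otherwise. The plan is to apply Proposition~\ref{prop:finite-gaussian-repair} with $Z=Z_1$, the map $T$ defined above, and repair set $S=\{s,\ldots,t\}$ from \eqref{eq:binomial-repair-set}. The random masses $\widehat\pi_k=\widehat\pi_k(Z_2)$ are those of \eqref{eq:binomial-random-mass}.

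\textbf{Structural hypotheses.} Box--Muller makes $Z_2$ standard normal and independent of $Z_1$, so each $\widehat\pi_k$ is independent of $Z_1$.

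\begin{itemize}
\item \emph{Monotonicity of $T$.} The map $\rho^{-1}$ in \eqref{eq:binomial-rho-inverse} is continuous and increasing: both branches equal $Np$ at $0$. On $z<0$ we have $\sqrt{Nq}-\tfrac12\sqrt p\,z>0$, so the left branch is increasing. Hence $T$ is increasing, with a jump of $+3$ at $z=0$.
\item \emph{The repair set.} $S\ne\emptyset$, and $\sum_{k\in S}\pi_k>0$ because $S\subseteq\{0,\ldots,N\}$.
\item \emph{The cells.} Each $k\in\{0,\ldots,N\}\setminus S$ lies either below $s$ or above $t$. I will check that this forces $k+2\le Np$ or $k-2\ge Np$ respectively. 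This uses $s\le\lfloor Np\rfloor-2$ when $s>0$ (so $k\le s-1$ gives $k+2\le\lfloor Np\rfloor$), and $t\ge\lfloor Np\rfloor+2$ when $t<N$ (so $k\ge t+1$ gives $k-2\ge Np$). Both are case checks on the $\max$/$\min$ in \eqref{eq:binomial-repair-set}. In each case the preimage $T^{-1}([k,k+1))$ lies entirely on one side of $0$ and equals the cell in \eqref{eq:binomial-cells}. Since $\rho$ is strictly increasing and finite on $[0,N]$, we get $0<|I_k|<\infty$.
\item \emph{The algorithm's $L$.} A direct computation of $\rho(k+2)-\rho(k+1)$ and $\rho(k-1)-\rho(k-2)$ from \eqref{eq:binomial-rho} shows that the quantity $L$ in \Cref{alg:binomial} equals $|I_K|$. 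For example, $\frac2{\sqrt q}(\sqrt{k-1}-\sqrt{k-2})=\frac2{\sqrt q}(\sqrt{k-1}+\sqrt{k-2})^{-1}$. Hence the $A$ computed by the algorithm is the $A$ of the proposition.
\item \emph{Proposals outside $\{0,\ldots,N\}$.} Indices $K<0$ or $K>N$ have $\pi_K=0$ and receive $A=0$, as the proposition allows.
\end{itemize}

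\textbf{Unbiasedness and envelope.} Unbiasedness $\E\widehat\pi_k=\pi_k$ is \eqref{eq:binomial-unbiasedness}, which comes from Lemma~\ref{lem:binomial-weight}. Nonnegativity is clear from \eqref{eq:binomial-weight}. The almost-sure bound \eqref{eq:discrete-repair-pi-bound} is exactly Lemma~\ref{lem:binomial-envelope}. Hence $0\le A\le1$.

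\textbf{Complement arm.} On the branch $U_3>A$, the variable $V=(U_3-A)/(1-A)$ is uniform and independent of $(Z_1,Z_2)$ conditional on rejection. This is the recycling argument cited earlier: conditioning on $(Z_1,Z_2)$, $U_3$ is uniform on $(A,1]$. It remains to check the weights. For $s+j<t$ the loop sets
\[
w_{j+1}/w_j=\frac{N-s-j}{s+j+1}\cdot\frac pq=\frac{\pi_{s+j+1}}{\pi_{s+j}},
\]
and $w_{j+1}=0$ once $s+j\ge t$. So $(w_0,\ldots,w_4)\propto(\pi_s,\ldots,\pi_t,0,\ldots)$, and \eqref{eq:categorical-inversion} samples the law proportional to $(\pi_k)_{k\in S}$. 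When $t-s<4$ the zero weights receive probability $0$, which is the one detail to verify in \eqref{eq:categorical-inversion}.

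\textbf{Conclusion and main obstacle.} Proposition~\ref{prop:finite-gaussian-repair} then gives the $\Bin(N,p)$ law. Only three uniforms are used and every step is an elementary operation, so the family is 3-simple. The only delicate step, beyond the envelope lemma already assumed, is the bookkeeping that every $k\notin S$ falls into one of the two cases of \eqref{eq:binomial-cells}. This includes the small-$N$ regime, where $N-4<0$ or $s=0$ and $S$ may cover all of $\{0,\ldots,N\}$, in which case the acceptance branch is vacuous.
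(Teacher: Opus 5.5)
Your proposal is correct and follows the same route as the paper. Like the paper, you treat $p\in\{0,1\}$ directly, then apply Proposition~\ref{prop:finite-gaussian-repair} with $Z=Z_1$, unbiasedness from \eqref{eq:binomial-unbiasedness}, the envelope from Lemma~\ref{lem:binomial-envelope}, and $w_j\propto\pi_{s+j}$ for the complement arm. The extra bookkeeping you supply is accurate and makes explicit what the paper's short proof leaves implicit: every $k\notin S$ falls into one of the two cases of \eqref{eq:binomial-cells}, the algorithm's $L$ equals $|I_K|$, uniform recycling is justified, and zero categorical weights receive probability zero.
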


\begin{proof}
  The cases $p=0$ and $p=1$ are immediate.  For $0<p<1$, let
  $\pi_k=\Pp\{\Bin(N,p)=k\}$.  Apply
  Proposition~\ref{prop:finite-gaussian-repair}, where the transformation $T$
  satisfies the required hypotheses, $\widehat\pi_k(Z_2)$ is unbiased by
  \eqref{eq:binomial-unbiasedness}, and the inequality
  \eqref{eq:discrete-repair-pi-bound} is given by
  \eqref{eq:binomial-envelope}.  Finally,
  $w_j\propto\pi_{s+j}$, so the complement arm draw has the required distribution.
\end{proof}

\section{Conclusion}
We have shown that the gamma, Poisson, and binomial families admit extended one-liners.
As these distributions play a central role in probability theory, several additional families admit
extended one-liners by transformation or composition:
negative binomial, beta, inverse gamma, (noncentral) chi-squared and F, beta-binomial, Skellam,
multinomial, and Dirichlet (for fixed dimension).
Although all three distributions are 3-simple, the transformations presented here are complicated
and require computing several computationally expensive operations; finding simpler one-liners would
be of practical interest.
We refer the reader to recent work by Devroye \cite{Devroye2026} for simplifications,
extensions, and remaining open questions.
\label{sec:conclusion}

\section*{Acknowledgements}
The results in this paper stemmed from significant interaction with ChatGPT 5.5, 5.6 Sol, and Codex,
which found the constructions and proofs. Codex was additionally used to verify the results in Lean
and for help with typesetting; Astra was used for review.
The author independently verified the results and takes full responsibility for the paper and any
errors.
\appendix

\section{Gamma Extended One-Liner Below One}\label{app:gamma-below-one}

The following three-uniform gamma extended one-liner for $a<1$ is from
\cite{GreavesBelowOne}.
\begin{algorithm}[H]
  \caption{\textsc{GammaBelowOne}$(a)$: $\Gam(a)$ sampler for $0<a<1$}
  \label{alg:gamma-below-one}
  \begin{algorithmic}[1]
    \Require $0<a<1$
    \State $U_1,U_2,U_3\overset{\mathrm{i.i.d.}}{\sim}\Unif(0,1)$
    \State $b\gets1-a$, $\kappa\gets\sin(\pi a)/(\pi ab)$
    \State $R\gets(U_1/b)^{1/a}$ if $U_1\leq b$, else $R\gets\{(1-U_1)/a\}^{1/b}$
    \State $P\gets R/(1+R)$ if $U_1\leq b$, else $P\gets1/(1+R)$
    \State $A\gets P+\kappa(b-P)\max(P,1-P)$
    \State \textbf{return} $-\log U_3\cdot
    \begin{cases}
      PU_2/A,&U_2\leq A,\\
      P+(1-P)(U_2-A)/(1-A),&U_2>A.
    \end{cases}$
  \end{algorithmic}
\end{algorithm}

\section{Random Weight Proofs}\label{app:saddlepoint-weight-proofs}

\subsection{Reciprocal-Gamma Weight}\label{app:reciprocal-gamma-weight}

\begin{proof}[Proof of Lemma~\ref{lem:reciprocal-gamma-weight}]
  We use the identity
  \begin{equation}
    \frac{(e/a)^a}{\pi}\int_0^\pi e^{-a\Phi(\theta)}\dd\theta
    =\frac1{\Gamma(a+1)}, \label{eq:temme-identity}
  \end{equation}
  which follows from an application of the saddlepoint method to Hankel's integral for the
  reciprocal gamma function; see DLMF~\cite{DLMF}, Temme~\cite[Section 3.6.3]{Temme1996} for a
  proof. The density of $\Theta$ is
  \[
    \sqrt{\frac{2a}{\pi}}e^{-a\theta^2/2},
    \qquad \theta>0.
  \]
  Thus
  \[
    \E W_a(\Theta)
    =\frac{(e/a)^a}{\pi}\int_0^\pi e^{-a\Phi(\theta)}\dd\theta
    =\frac1{\Gamma(a+1)}.
  \]
  The upper bound follows from
  \[
    \frac{d}{d\theta}\left(\Phi(\theta)-\frac{\theta^2}{2}\right)
    =\frac{(\theta\cos\theta-\sin\theta)^2}{\theta\sin^2\theta}\geq0,
  \]
  and $\lim_{\theta\downarrow0}\{\Phi(\theta)-\theta^2/2\}=0$.
\end{proof}

\subsection{Binomial Weight}\label{app:binomial-weight}

\begin{proof}[Proof of Lemma~\ref{lem:binomial-weight}]
  We use an integral due to Mallows:
  \[
    \frac1\pi\int_0^\pi
    \frac{(\sin\theta)^t}
    {(\sin(a\theta))^{at}(\sin(b\theta))^{bt}}\dd\theta
    =\frac{\Gamma(t+1)}{\Gamma(at+1)\Gamma(bt+1)},
    \qquad t\geq0,\quad a,b>0,\quad a+b=1.
  \]
  Mallows showed this probabilistically using negative moments of Kanter's representation of a
  positive stable random variable \cite{Williams1977,Kanter1975,Mallows1980};
  a direct proof is given by Evans, Ismail, and Stanton
  \cite[Eq.~(1.2)]{EvansIsmailStanton1982}.  Taking
  \[
    t=N,\qquad a=x=\frac{k}{N},
    \qquad b=y=1-\frac{k}{N},
  \]
  and noting that
  \[
    e^{-N\Phi_x(\theta)}
    =x^ky^{N-k}
    \frac{(\sin\theta)^N}
    {(\sin(x\theta))^k(\sin(y\theta))^{N-k}},
  \]
  gives
  \begin{equation}
    \frac1\pi\int_0^\pi e^{-N\Phi_x(\theta)}\dd\theta
    =\binom Nkx^ky^{N-k} \label{eq:finite-binomial-contour}
  \end{equation}
  The density of $\Theta$ is
  \[
    \sqrt{\frac{2Nxy}{\pi}}e^{-Nxy\theta^2/2},
    \qquad \theta>0.
  \]
  Thus
  \[
    \E W_{N,k}(\Theta)
    =\frac1\pi\int_0^\pi e^{-N\Phi_x(\theta)}\dd\theta
    =\binom Nkx^ky^{N-k}.
  \]

  For the upper bound, by Euler's sine product formula, for
  $0\leq\theta<\pi$,
  \[
    \log\frac{\theta}{\sin\theta}
    =\sum_{n\geq1}\sum_{m\geq1}
    \frac1m\left(\frac{\theta^2}{n^2\pi^2}\right)^m
    =\sum_{m\geq1}c_m\theta^{2m},
    \qquad
    c_m=\frac{\zeta(2m)}{m\pi^{2m}}>0,
    \quad c_1=\frac16.
  \]
  Therefore
  \[
    \Phi_x(\theta)
    =\sum_{m\geq1}c_m\theta^{2m}
    \{1-x^{2m+1}-y^{2m+1}\}.
  \]
  Since $x+y=1$, the $m=1$ term is
  $c_1\theta^2(1-x^3-y^3)=xy\theta^2/2$, and every remaining term is
  nonnegative.  Hence $\Phi_x(\theta)-xy\theta^2/2\geq0$.
\end{proof}

\section{Envelope Proofs}\label{app:envelope-proofs}

\subsection{Gamma Envelope}\label{app:gamma-envelope}

Set
\[
  M_a=\frac{a(e/a)^a}{\sqrt{2\pi a}}.
\]
By Lemma~\ref{lem:reciprocal-gamma-weight},
\[
  0\leq\widehat w_a(z)\leq M_a.
\]
Define
\[
  L_a(x)=\log\frac{M_ag_a(x)}{\phi(x)}
  =\delta_a-J_a(x),
\]
where, for $x>-1/c$,
\begin{equation}
  J_a(x)=3d\left\{\frac{(cx)^3}{3}-\frac{(cx)^2}{2}
  +cx-\log(1+cx)\right\}
  =\frac{c^2x^4}{3}\int_0^1\frac{s^3}{1+cxs}\dd s. \label{eq:cubic-J}
\end{equation}

\begin{lemma}\label{lem:cubic}
  For $a\geq1$,
  \[
    0<\delta_a\leq\frac{2}{15},\qquad
    J_a(x)\geq\delta_a\min\{x^4/16,1\},
    \qquad J_a(x)<\frac43\quad(|x|\leq2).
  \]
\end{lemma}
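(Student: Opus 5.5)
The plan is to treat the three assertions separately. The first is a one-variable series estimate. The other two reduce, via monotonicity properties of $J_a$ read off from \eqref{eq:cubic-J}, to checking $J_a$ at the two points $x=\pm2$.

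\emph{Bounds on $\delta_a$.} Substitute $u=1/(3a)\in(0,1/3]$. Then $d/a=1-u$ and $a-\tfrac12=\tfrac1{3u}-\tfrac12$, so
\[
  \delta_a=\frac13+\Bigl(\frac1{3u}-\frac12\Bigr)\log(1-u).
\]
Expanding $\log(1-u)=-\sum_{n\ge1}u^n/n$, the constant $\tfrac13$ cancels, and the $u^n$ coefficient is $\tfrac1{2n}-\tfrac1{3(n+1)}=\tfrac{n+3}{6n(n+1)}>0$. In particular the $u^1$ coefficient is $\tfrac13$, and
\[
  \delta_a=\frac u3+\sum_{n\ge2}\frac{n+3}{6n(n+1)}u^n .
\]
Hence $\delta_a>0$ and $\delta_a$ is increasing in $u$, i.e.\ decreasing in $a$. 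The maximum is therefore at $a=1$, where $\delta_1=\tfrac13+\tfrac12\log\tfrac23\approx0.1306\le\tfrac2{15}$.

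\emph{Reduction for $J_a$.} By \eqref{eq:cubic-J}, $J_a(x)/x^4=\tfrac{c^2}{3}\int_0^1 s^3(1+cxs)^{-1}\,ds$, and the integrand is decreasing in $x$ on $(-1/c,\infty)$. So $J_a(x)/x^4$ is decreasing in $x$. Also $J_a'(x)=3dc\,(cx)^3/(1+cx)$, which has the sign of $x$, so $J_a$ decreases on $(-1/c,0]$ and increases on $[0,\infty)$. Consequently:
\begin{itemize}
  \item For $|x|\le2$ we have $J_a(x)/x^4\ge J_a(2)/16$.
  \item For $|x|\ge2$ we have $J_a(x)\ge\min\{J_a(2),J_a(-2)\}=J_a(2)$, using $J_a(-2)\ge J_a(2)$ from the ratio monotonicity.
  \item $\max_{|x|\le2}J_a=\max\{J_a(2),J_a(-2)\}=J_a(-2)$.
\end{itemize}
Both inequalities for $J_a$ therefore reduce to the two point estimates $J_a(2)\ge\delta_a$ and $J_a(-2)<\tfrac43$ for all $a\ge1$.

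\emph{The upper estimate.} Use $c^2=1/(9d)$, so that $J_a(\pm2)=\tfrac{16c^2}{3}\int_0^1 s^3(1\pm2cs)^{-1}\,ds$. Since $c$ decreases in $a$, $J_a(-2)$ is decreasing in $a$. It is therefore maximized at $a=1$, where $d=2/3$ and $2c\approx0.816$. There $\tfrac{16c^2}{3}=\tfrac89$, and the series $\sum_n(0.816)^n/(n+4)\approx0.8$ gives $J_1(-2)\approx0.7<\tfrac43$. To make this rigorous I would bound the tail of that series geometrically.

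\emph{The main obstacle: $J_a(2)\ge\delta_a$.} In the variable $u$ we have $c^2=u/\{3(1-u)\}$, and $J_a(2)=\tfrac{16c^2}{3}\sum_{n\ge0}(-2c)^n/(n+4)$. As $a\to\infty$, $J_a(2)\sim\tfrac4{27a}$ while $\delta_a\sim\tfrac1{9a}=\tfrac3{27a}$, so there is a comfortable margin for large $a$. At $a=1$, however, the estimate is nearly tight: $J_1(2)\approx0.133$ against $\delta_1\approx0.1306$. In particular the crude alternating truncation $J_a(2)\ge c^2(\tfrac43-\tfrac{32c}{15})$ fails near $a=1$.

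I would split the range of $a$:
\begin{itemize}
  \item For $a\ge a_0$ (say $a_0=2$), use the truncated alternating series for $J_a(2)$, keeping enough terms to be a lower bound. Compare it against an upper bound for $\delta_a$ obtained by dominating the series tail by $u^2\sum_{n\ge2}\tfrac{n+3}{6n(n+1)}3^{-(n-2)}$. Both sides become explicit rational functions of $u$ or $c$, and the inequality can be checked by clearing denominators.
  \item On the compact range $1\le a\le a_0$, use the exact closed form $J_a(2)=3d\{\tfrac{(2c)^3}3-\tfrac{(2c)^2}2+2c-\log(1+2c)\}$. Verify $J_a(2)-\delta_a>0$ by interval arithmetic on a finite grid, with Lipschitz bounds on both sides in $a$. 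Since the margin at $a=1$ is only about $0.002$, this verification is the delicate part and may need a fine grid near $a=1$.
\end{itemize}
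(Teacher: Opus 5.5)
Your reductions are correct and essentially match the paper's. Monotonicity of $J_a(x)/x^4$ on $(-1/c,\infty)$, together with the sign of $J_a'$, reduces the second and third claims to $J_a(2)\ge\delta_a$ and $J_a(-2)<\frac43$; the paper compares $\pm x$ via $1-cxs\le1+cxs$, which amounts to the same thing. Your series $\delta_a=\frac u3+\sum_{n\ge2}\frac{n+3}{6n(n+1)}u^n$ is a clean, rigorous alternative proof of $0<\delta_a\le\frac2{15}$. Your bound on $J_a(-2)$ is also fine: bounding every $1/(n+4)$ by $\frac14$ already gives the paper's $J_a(x)\le 4c^2/\{3(1-2c)\}<\frac43$, using $c\le1/\sqrt6<\sqrt2-1$.

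\textbf{The gap is the key inequality $J_a(2)\ge\delta_a$.} You give only a plan for a piecewise, partly interval-arithmetic verification, so as written the central step is not proved. The plan is feasible in principle but unnecessary. You were pushed toward it because alternating-series truncation is a poor estimate of $\int_0^1 s^3(1+2cs)^{-1}\,ds$ near $a=1$, where $2c\approx0.82$ and the series converges slowly.

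The missing idea is convexity. Let $S$ have density $4s^3$ on $[0,1]$, so its mean is $\frac45$. Jensen gives
\[
  \int_0^1\frac{s^3}{1+2cs}\,ds=\frac14\,\E\Bigl[\frac1{1+2cS}\Bigr]\ge\frac{5}{4(5+8c)},
  \qquad\text{so}\qquad J_a(2)\ge\frac{20c^2}{15+24c}.
\]
On the other side, put $r=1/(3d)=3c^2\in(0,\frac12]$. Then $\delta_a=\frac13-\frac{2-r}{6r}\log(1+r)$, and $\log(1+r)\ge\frac{2r}{2+r}$ gives $\delta_a\le\frac{2c^2}{2+3c^2}$; this also yields $\delta_a\le\frac2{15}$. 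The comparison $\frac{20c^2}{15+24c}\ge\frac{2c^2}{2+3c^2}$ is equivalent to $30c^2-24c+5\ge0$. Its discriminant $576-600$ is negative, so it holds for all $a\ge1$ with no case split and no numerics.

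Your numerics at $a=1$ are also slightly off: $J_1(2)\approx0.1354$, not $0.133$. So the margin over $\delta_1\approx0.1306$ is about $0.005$. Both the rational bound $\frac2{15}\approx0.1333$ on $\delta_1$ and the Jensen bound $\approx0.1344$ on $J_1(2)$ fit inside that margin.
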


\begin{proof}
  Put $r=1/(3d)=3c^2$.  Since $a\geq1$, we have $0<r\leq1/2$ and
  \[
    \delta_a=\frac13-\frac{2-r}{6r}\log(1+r).
  \]
  The bounds
  \[
    \frac{2z}{2+z}\leq\log(1+z)\leq z-\frac{z^2}{2(1+z)},
    \qquad z\geq0,
  \]
  give
  \[
    \delta_a\geq\frac{r(4+r)}{12(1+r)}>0,
    \qquad
    \delta_a\leq\frac{2r}{3(2+r)}
    =\frac{2c^2}{2+3c^2}\leq\frac{2}{15}.
  \]
  For $0\leq x<1/c$, we have $1-cxs\leq1+cxs$, so the integral in
  \eqref{eq:cubic-J} gives $J_a(-x)\geq J_a(x)$.  For $x>0$,
  differentiation under the integral and differentiation of
  \eqref{eq:cubic-J} give
  \[
    \frac{\dd}{\dd x}\frac{J_a(x)}{x^4}
    =-\frac{c^3}{3}\int_0^1\frac{s^4}{(1+cxs)^2}\dd s<0,
    \qquad
    J_a'(x)=\frac{3dc(cx)^3}{1+cx}>0.
  \]
  Thus $J_a(x)/x^4$ decreases and $J_a(x)$ increases on $(0,\infty)$.
  If $S$ is a random variable with density $4s^3$ on $[0,1]$ and
  mean $4/5$, then Jensen's inequality yields
  \[
    \int_0^1\frac{s^3}{1+2cs}\dd s
    \geq\frac{5}{4(5+8c)}.
  \]
  Thus,
  \[
    J_a(2)\geq\frac{20c^2}{15+24c}
    \geq\frac{2c^2}{2+3c^2}\geq\delta_a,
  \]
  where the middle inequality is equivalent to
  $30c^2-24c+5>0$.  It follows from the
  monotonicity properties above that
  \[
    J_a(x)\geq\delta_a\min\{x^4/16,1\}.
  \]
  Finally, for $|x|\leq2$, \eqref{eq:cubic-J} gives
  \[
    J_a(x)\leq\frac{c^2x^4}{12(1-2c)}
    \leq\frac{4c^2}{3(1-2c)}<\frac43,
  \]
  where the last inequality follows from
  $c\leq1/\sqrt6<\sqrt2-1$.
\end{proof}

\begin{proof}[Proof of Lemma~\ref{lem:gamma-envelope}]
  If $x\leq-1/c$, then $g_a(x)=r_a(x)=0$ because $-1/c<-2$.
  Assume $x>-1/c$ and put $u=|x|/2$.  On $u\leq1$,
  Lemma~\ref{lem:cubic} gives $L_a(x)\leq\delta_a(1-u^4)$.  By convexity,
  \[
    e^{\delta t}-1\leq t(e^\delta-1),
    \qquad 0\leq t\leq1,
  \]
  and
  \[
    1-u^4\leq\frac65e^{2u^2}(1-u),\qquad0\leq u\leq1,
  \]
  because $1-u^4=(1-u)(1+u)(1+u^2)$,
  $1+u^2\leq e^{u^2}$, and
  $1+u\leq(6/5)e^{u^2}$.  Therefore
  \[
    e^{L_a(x)}-1
    \leq(e^{\delta_a}-1)(1-u^4)
    \leq\frac65(e^{\delta_a}-1)e^{2u^2}(1-u)
    =\frac{M_ar_a(x)}{\phi(x)}.
  \]
  Furthermore, we have
  \[
    \frac{r_a(x)}{g_a(x)}
    =\frac65\eta_a(1-u)e^{J_a(x)+2u^2}.
  \]
  Since $\eta_a\leq\delta_a\leq2/15$ and $J_a(x)<4/3$, this gives
  \[
    \frac{r_a(x)}{g_a(x)}
    \leq\frac65\frac{2}{15}e^{4/3}
    \max_{0\leq u\leq1}(1-u)e^{2u^2}
    =\frac4{25}e^{4/3}<1.
  \]
  The maximum is attained at $u=0$, since the derivative is nonpositive on
  $[0,1]$.  On $u\geq1$, $r_a=0$ and $L_a\leq0$.
  Using $\widehat w_a\leq M_a$ gives the desired inequality.
\end{proof}

\subsection{Poisson Envelope}\label{app:poisson-envelope}

Fix $\lambda>0$.  Let $\rho$, $S$, and $I_k$ be as defined in
Section~\ref{sec:poisson}, set
\[
  D(t)=D_{\mathrm P}(t\Vert\lambda),
\]
and put
\[
  M(k)=
  \begin{cases}
    \displaystyle\frac{e^{-D(k)}}{\sqrt{2\pi k}},&k\geq1,\\[2mm]
    e^{-D(0)},&k=0.
  \end{cases}
\]
Write
\[
  d(x,y)=x\log(x/y)-x+y.
\]
Holding $y$ fixed, differentiating the difference of the left and right-hand side below shows
\begin{equation}\label{eq:poisson-divergence-bound}
  d(x,y)\geq
  \begin{cases}
    (x-y)^2/(2y),&0\leq x\leq y,\\
    2(\sqrt x-\sqrt y)^2,&x\geq y.
  \end{cases}
\end{equation}
These bounds give
\begin{equation}\label{eq:poisson-kl}
  D(t)=d(t,\lambda)\geq\frac12\rho(t)^2.
\end{equation}
For $j\geq1$, let
\[
  A_j=\frac{\sqrt{j(j+2)}}{j+1}
  \left(\frac{j}{j+1}\right)^j.
\]
The continuous function
\[
  A(t)=\frac{\sqrt{t(t+2)}}{t+1}
  \left(\frac{t}{t+1}\right)^t,
  \qquad t\geq1.
\]
is strictly decreasing, as
\[
  \frac{\dd}{\dd t}\log A(t)
  =\frac{t+1-t(t+2)\log(1+1/t)}{t(t+2)}<0,
\]
where the inequality follows from
$\log(1+1/t)>2/(2t+1)$.  Since
$A(t)\to e^{-1}$, we have
\begin{equation}\label{eq:A-bound}
  A_j>e^{-1}.
\end{equation}

\begin{lemma}[Poisson bound]\label{lem:poisson-cell}
  For every $k\in\mathbb N\setminus S$,
  \[
    M(k)\leq |I_k|\inf_{z\in I_k}\phi(z).
  \]
\end{lemma}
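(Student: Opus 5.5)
The plan is to split according to the two branches of \eqref{eq:poisson-cells}, after first checking that these branches cover every $k\in\mathbb N\setminus S$.

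\emph{Coverage.} If $k<s$, then $s\geq1$, so $s=\lfloor\lambda\rfloor-1$. Hence $k\leq\lfloor\lambda\rfloor-2$, which gives $k+2\leq\lambda$. If $k>s+2$, then $k\geq s+3\geq\lfloor\lambda\rfloor+2>\lambda+1$. When $\lambda<2$ we have $s=0$, so this reads $k\geq3>\lambda+1$. Thus exactly one of the conditions $k+2\leq\lambda$ or $k-1\geq\lambda$ holds.

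\emph{Right branch, $k-1\geq\lambda$.} Here $I_k=[\rho(k-1),\rho(k))\subset[0,\infty)$. Since $\phi$ decreases on $[0,\infty)$, the infimum of $\phi$ over $I_k$ is at least $\phi(\rho(k))$. By \eqref{eq:poisson-kl}, $\phi(\rho(k))\geq e^{-D(k)}/\sqrt{2\pi}$. The cell length is
\[
|I_k|=2(\sqrt k-\sqrt{k-1})=\frac{2}{\sqrt k+\sqrt{k-1}}\geq\frac1{\sqrt k}.
\]
Multiplying these two bounds gives exactly $M(k)$, so this case needs nothing more.

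\emph{Left branch, $k+2\leq\lambda$.} Here $I_k=[\rho(k+1),\rho(k+2))\subset(-\infty,0]$, with $|I_k|=1/\sqrt\lambda$. The infimum of $\phi$ is attained at the left endpoint, and \eqref{eq:poisson-kl} gives $\phi(\rho(k+1))\geq e^{-D(k+1)}/\sqrt{2\pi}$. The loss is that this bound involves $D(k+1)$ rather than $D(k)$, and I expect controlling this shift to be the main obstacle.

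For $k\geq1$ it therefore suffices to show $e^{D(k+1)-D(k)}\leq\sqrt{k/\lambda}$. A direct computation gives
\[
e^{D(k+1)-D(k)}=\frac{k+1}{\lambda}\Bigl(1+\frac1k\Bigr)^ke^{-1}.
\]
The required inequality then becomes
\[
\frac{(k+1)(1+1/k)^k}{e\sqrt{k\lambda}}\leq1 .
\]
Using $\lambda\geq k+2$, the left side is at most
\[
\frac{(k+1)(1+1/k)^k}{e\sqrt{k(k+2)}}=\frac1{eA_k},
\]
and this is less than $1$ by \eqref{eq:A-bound}. This is exactly where the monotonicity of $A(t)$ is used.

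For $k=0$ we have $M(0)=e^{-\lambda}$ and $\lambda\geq2$. Since $e^{-D(1)}=\lambda e^{1-\lambda}$, the bound $|I_0|\,\phi(\rho(1))\geq e\sqrt\lambda\,e^{-\lambda}/\sqrt{2\pi}$ holds. This is at least $e^{-\lambda}$ because $\lambda\geq2>2\pi/e^2$, which completes the left branch and the proof.
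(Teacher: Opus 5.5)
Your proof is correct and follows the paper's argument essentially line for line. You use the same split into the branch $k+2\leq\lambda$, with $k\geq1$ handled through $e^{D(k)-D(k+1)}$ and \eqref{eq:A-bound} and $k=0$ treated separately, and the branch $k-1\geq\lambda$, handled through $|I_k|>1/\sqrt k$ and \eqref{eq:poisson-kl} at $k$; your preliminary check that every $k\notin S$ falls into one of these two branches is a small addition the paper leaves implicit in \eqref{eq:poisson-cells}.
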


\begin{proof}
  Suppose first that $1\leq k$ and $k+2\leq\lambda$.  Then $I_k\subset (-\infty,0]$, hence
  \[
    |I_k|=\frac1{\sqrt\lambda},\qquad
    \inf_{z\in I_k}\phi(z)=\phi(\rho(k+1)).
  \]
  By \eqref{eq:poisson-kl} at $k+1$,
  \[
    \phi(\rho(k+1))
    \geq\frac{e^{-D(k+1)}}{\sqrt{2\pi}}.
  \]
  Therefore
  \[
    \begin{aligned}
      \frac{|I_k|\inf_{z\in I_k}\phi(z)}{M(k)}
      &\geq\sqrt{\frac{k}{\lambda}}
      e^{D(k)-D(k+1)}\\
      &=e\sqrt{k\lambda}\,\frac{k^k}{(k+1)^{k+1}}
      \geq eA_k>1
    \end{aligned}
  \]
  by \eqref{eq:A-bound}.  Similarly, if $k=0$, then $\lambda\geq2$, and
  \[
    \frac{|I_0|\inf_{z\in I_0}\phi(z)}{M(0)}
    \geq\frac{e^{D(0)-D(1)}}{\sqrt{2\pi\lambda}}
    =e\sqrt{\frac{\lambda}{2\pi}}
    \geq \frac{e}{\sqrt\pi}>1.
  \]
  For the remaining case $k-1\geq\lambda$,
  \[
    |I_k|=2(\sqrt k-\sqrt{k-1})>\frac1{\sqrt k}.
  \]
  Moreover, by \eqref{eq:poisson-kl}, \[
    \inf_{z\in I_k}\phi(z)
    \geq \phi(\rho(k))
    \geq \frac{e^{-D(k)}}{\sqrt{2\pi}},
  \]
  and therefore
  \[
    |I_k|\inf_{z\in I_k}\phi(z)
    >\frac{e^{-D(k)}}{\sqrt{2\pi k}}
    =M(k).
  \]
\end{proof}

\begin{proof}[Proof of Lemma~\ref{lem:poisson-envelope}]
  For $k\geq1$, the upper bound in Lemma~\ref{lem:reciprocal-gamma-weight} gives
  \[
    \begin{aligned}
      \widehat\pi_k(z)
      &=e^{-\lambda}\lambda^kW_k(|z|/\sqrt k)\\
      &\leq e^{-\lambda}\lambda^k\frac{(e/k)^k}{\sqrt{2\pi k}}
      =\frac{e^{-D(k)}}{\sqrt{2\pi k}}
      =M(k).
    \end{aligned}
  \]
  If $k=0$, then $\widehat\pi_0(z)=e^{-\lambda}=M(0)$.  Hence, for
  $k\notin S$, Lemma~\ref{lem:poisson-cell} gives
  \[
    \widehat\pi_k(z)\leq M(k)
    \leq |I_k|\inf_{w\in I_k}\phi(w).\qedhere
  \]
\end{proof}

\subsection{Binomial Envelope}\label{app:binomial-envelope}

The case $N=1$ is vacuously true because $S=\{0,1\}$.  Fix
$N\geq2$ and $p\in(0,1)$, and write $q=1-p$ and $\mu=Np$.  Let $\rho$,
$S$, and $I_k$ be as defined in Section~\ref{sec:binomial}, and set
\[
  D(t)=N D_{\mathrm B}(t/N\Vert p),
  \qquad 0\leq t\leq N,
\]
and put
\[
  M(k)=
  \begin{cases}
    \displaystyle
    \frac{e^{-D(k)}}
    {\sqrt{2\pi k(N-k)/N}},&1\leq k\leq N-1,\\[3mm]
    e^{-D(k)},&k\in\{0,N\}.
  \end{cases}
\]
For $j\geq1$, let
\[
  B_j=\sqrt{\frac{j+2}{j}}
  \left(\frac{j+2}{j+1}\right)^{j+2},
  \qquad
  \widetilde B_j=\sqrt{\frac{j+1}{j}}\,B_j.
\]
Since $\sqrt{(j+2)/j}>1$,
\begin{equation}\label{eq:B-bound}
  B_j>
  \left(1+\frac1{j+1}\right)^{j+2}>e.
\end{equation}

\begin{lemma}[Binomial bound]\label{lem:binomial-cell}
  For every $k\in\{0,\ldots,N\}\setminus S$,
  \[
    M(k)
    \leq |I_k|\inf_{z\in I_k}\phi(z).
  \]
\end{lemma}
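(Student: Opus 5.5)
We mirror the proof of Lemma~\ref{lem:poisson-cell}, splitting $k\notin S$ into a left case $k+2\leq\mu$ and a right case $k-2\geq\mu$, with the endpoints $k=0$ and $k=N$ treated separately.

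The first ingredient is the Bernoulli analogue of \eqref{eq:poisson-kl}, namely $D(t)\geq\frac12\rho(t)^2$. Write
\[
  D(t)=d(t,\mu)+d(N-t,Nq),
\]
where $d$ is the function from the Poisson section; the linear parts cancel since $t+(N-t)=\mu+Nq$.

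For $t\geq\mu$, apply \eqref{eq:poisson-divergence-bound} to both terms. The first gives $d(t,\mu)\geq2(\sqrt t-\sqrt\mu)^2$, and the second (with $N-t\leq Nq$) gives $d(N-t,Nq)\geq(t-\mu)^2/(2Nq)$. We want their sum to dominate
\[
  \tfrac12\rho(t)^2=\frac{2}{q}(\sqrt t-\sqrt\mu)^2 .
\]
Put $u=\sqrt t-\sqrt\mu$, so that $t-\mu=u(\sqrt t+\sqrt\mu)$. It then suffices to show
\[
  2+\frac{(\sqrt t+\sqrt\mu)^2}{2Nq}\geq\frac2q .
\]
This may fail for $t$ near $\mu$, so I would instead prove the bound directly. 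Regard $D$ as a function of $v=\sqrt t$ and compare second derivatives, i.e.\ check that $D(t)-\frac{2}{q}(\sqrt t-\sqrt\mu)^2$ vanishes with zero derivative at $t=\mu$ and is convex in $\sqrt t$. Here $\frac{d^2}{dv^2}$ of the target is $4/q$, and
\[
  \frac{d^2 D}{dv^2}=\frac{2(N-t)+4t\cdot N/(N-t)}{\cdots}.
\]
Only the inequality is needed, and the case $t\leq\mu$ is symmetric under $t\mapsto N-t$, $p\leftrightarrow q$. This gives $\phi(\rho(t))\geq e^{-D(t)}/\sqrt{2\pi}$.

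\textbf{Left case}, $k+2\leq\mu$. Here $I_k=[\rho(k+1),\rho(k+2))\subset(-\infty,0]$, so
\[
  |I_k|=\frac{2}{\sqrt p}\bigl(\sqrt{N-k-1}-\sqrt{N-k-2}\bigr)=\frac{2}{\sqrt p}\bigl(\sqrt{N-k-1}+\sqrt{N-k-2}\bigr)^{-1}\geq\frac{1}{\sqrt{p(N-k-1)}},
\]
and $\inf_{I_k}\phi=\phi(\rho(k+1))\geq e^{-D(k+1)}/\sqrt{2\pi}$. The ratio to bound is
\[
  \sqrt{\frac{k(N-k)}{Np(N-k-1)}}\;e^{D(k)-D(k+1)},
\]
and one computes
\[
  e^{D(k)-D(k+1)}=\frac{p}{q}\,\frac{k^k(N-k)^{N-k}}{(k+1)^{k+1}(N-k-1)^{N-k-1}}.
\]
This is awkward, so I would use the symmetry $k\mapsto N-k$, $p\leftrightarrow q$ to transport the left case to the right case. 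The transformation $T$ is not symmetric ($-1$ versus $+2$), but the left cells sit one step further from the mode, which only helps.

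\textbf{Right case}, $k-2\geq\mu$. We have $|I_k|=\frac{2}{\sqrt q}(\sqrt{k-1}+\sqrt{k-2})^{-1}>1/\sqrt{qk}$ and $\inf_{I_k}\phi=\phi(\rho(k-1))\geq e^{-D(k-1)}/\sqrt{2\pi}$. Since $D$ is increasing on $t\geq\mu$, $D(k-1)\leq D(k)$, and the cell density dominates directly. This leaves
\[
  \frac{|I_k|\inf\phi}{M(k)}\geq\sqrt{\frac{k(N-k)}{N qk}}=\sqrt{\frac{N-k}{Nq}} .
\]
This drops below $1$ when $k>\mu$, because $N-k<Nq$, so the spare factor $e^{D(k)-D(k-1)}$ must be used. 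It equals
\[
  \frac{q}{p}\Bigl(\frac{k}{N-k}\Bigr)\times(\text{Stirling-type ratio}),
\]
and combining it with $\sqrt{(N-k)/(Nq)}$ and using $\mu\leq k-2$ should yield a bound of the form $B_j/e>1$ or $\widetilde B_j/e>1$ via \eqref{eq:B-bound}, with $j$ tied to $N-k$. The endpoint $k=N$ is handled with $M(N)=p^N$ and the same ratio, where the factor $\sqrt{2\pi}$ gives slack; this is presumably where $\widetilde B_j$ enters.

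I expect this right-tail (and, symmetrically, left-tail) ratio to be the main obstacle. The offsets $-1$ and $+2$ in $T$ and the five-atom set $S$ were presumably tuned exactly so that this works. I would first reduce it to a one-variable inequality in $j=N-k$ (or $j=k$) after bounding $p$ by its extreme allowed value $p=(k-2)/N$, and then compare with $B_j$.
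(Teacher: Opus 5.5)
Your setup is sound: the bound $D\geq\frac12\rho^2$, the cell lengths, the infimum at $\rho(k-1)$, and the reduction of the right tail to $\sqrt{(N-k)/(Nq)}\,e^{D(k)-D(k-1)}$. The genuine gap is that the proposal stops exactly where the lemma's content begins. You never prove that this ratio exceeds $1$, and the shape you anticipate for it is wrong.

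Write $m=N-k\geq1$ and $j=k-2\geq\mu$. Then
\[
  e^{D(k)-D(k-1)}=\frac qp\,\frac{k^k}{(k-1)^{k-1}}\,\frac{m^m}{(m+1)^{m+1}},
  \qquad
  \frac{m^m}{(m+1)^{m+1}}=\frac{A_m}{\sqrt{m(m+2)}}.
\]
So on the right tail the $B$-type factor comes from the success side, with $j$ tied to $k$ rather than to $N-k$. The failure side contributes the Poisson quantity $A_{N-k}$, which is below $1$ and is only bounded below by $e^{-1}$ via \eqref{eq:A-bound}. Using $|I_k|>1/\sqrt{q(k-1)}$ one gets
\[
  \frac{|I_k|\inf_{I_k}\phi}{M(k)}
  \geq A_m\widetilde B_j\,\frac{j}{\mu}\sqrt{\frac{N-\mu}{m+2}}
  \geq A_mB_j>e^{-1}\cdot e=1,
\]
because $k-2\geq\mu$ gives both $j\geq\mu$ and $m+2\leq N-\mu$. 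This is the mirror image of the paper's left-case bound $A_k\widetilde B_j\frac{j}{N-\mu}\sqrt{\mu/(k+2)}$.

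Your plan of setting $\mu=k-2$ is legitimate, since the bound decreases in $p$ through $\sqrt q/p$. Your cruder bound $|I_k|>1/\sqrt{qk}$ also suffices. But what results is the product $A_m\cdot\frac{j+1}{j}\bigl(\frac{j+2}{j+1}\bigr)^{j+2}$ of a factor in $N-k$ and a factor in $k$. The deficit $A_m<1$ must be paid for by the second factor exceeding $e$, so this is not a single one-variable comparison with $B_j$.

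The endpoint $k=N$ is also mishandled. Since $M(N)=p^N$ lacks the $1/\sqrt{2\pi V}$ factor, ``the same ratio'' does not apply, and the $1/\sqrt{2\pi}$ coming from $\phi$ is a cost, not slack. The correct computation gives
\[
  \frac{|I_N|\inf_{I_N}\phi}{M(N)}\geq\Bigl(\frac N{N-1}\Bigr)^{N-1}\frac N\mu\sqrt{\frac{N(N-\mu)}{2\pi(N-1)}}>\frac2{\sqrt\pi}>1,
\]
using $(N/(N-1))^{N-1}\geq2$ and $N-\mu\geq2$; $\widetilde B_j$ plays no role there.

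Two smaller points. First, the direct route to $D\geq\frac12\rho^2$ that you abandoned does not fail near $t=\mu$. Your sufficient condition is equivalent to $(\sqrt t+\sqrt\mu)^2\geq4\mu$, which holds for all $t\geq\mu$ with equality at $t=\mu$; this is the paper's argument, mirrored. So the unfinished convexity detour is unnecessary.

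Second, the symmetry $k\mapsto N-k$, $p\leftrightarrow q$ is exact, not merely favorable. $\rho$ is antisymmetric under it, and because of the floor the left cell of $N-k$ for parameter $q$ is exactly the reflection of the right cell $[\rho(k-2),\rho(k-1))$ of $k$ for parameter $p$. Had the left cells really been farther from the mode, that would hurt rather than help.
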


\begin{proof}
  Since
  \[
    D(t)=d(t,\mu)+d(N-t,N-\mu),
  \]
  \eqref{eq:poisson-divergence-bound} yields
  \begin{equation}\label{eq:binomial-kl}
    D(t)\geq\frac12\rho(t)^2.
  \end{equation}
  Indeed, when $t\leq\mu$,
  \eqref{eq:poisson-divergence-bound} gives
  \[
    D(t)
    \geq (\mu-t)^2/(2\mu)
    +2\bigl(\sqrt{N-t}-\sqrt{N-\mu}\bigr)^2
    \geq \frac{2N}{\mu}
    \bigl(\sqrt{N-t}-\sqrt{N-\mu}\bigr)^2,
  \]
  where the last inequality follows from
  \[
    (a+b)^2+4\mu-4N=(a-b)(a+3b)\geq0,
  \]
  with $a=\sqrt{N-t}$ and $b=\sqrt{N-\mu}$.
  The case $t\geq\mu$ follows by symmetry with successes and failures
  interchanged: $p\mapsto 1-p$, $k\mapsto N-k$.

  Suppose first that $1\leq k$ and $k+2\leq\mu$, and put
  $j=N-k-2\geq1$.  Then $I_k\subset (-\infty,0]$, hence
  \[
    \inf_{z\in I_k}\phi(z)=\phi(\rho(k+1)).
  \]
  Moreover,
  \[
    \begin{aligned}
      |I_k|
      &=2\sqrt{\frac N\mu}\left(\sqrt{j+1}-\sqrt j\right)\\
      &=\frac{2\sqrt{N/\mu}}{\sqrt{j+1}+\sqrt j}
      >\sqrt{\frac{N}{\mu(j+1)}}.
    \end{aligned}
  \]
  Using \eqref{eq:binomial-kl} at $k+1$ gives
  \[
    \phi(\rho(k+1))
    \geq\frac{e^{-D(k+1)}}{\sqrt{2\pi}},
  \]
  therefore
  \[
    \frac{|I_k|\inf_{z\in I_k}\phi(z)}{M(k)}
    \geq
    A_k\widetilde B_j\,
    \frac{j}{N-\mu}\sqrt{\frac{\mu}{k+2}}.
  \]
  As $j\geq N-\mu$ and $\mu\geq k+2$, the last two factors are at
  least one.  Hence, by \eqref{eq:A-bound} and \eqref{eq:B-bound}, the ratio
  is greater than $A_kB_j>1$.  Similarly, if $k=0$ and $k\not\in S$, then
  $\mu\geq2$, and
  \[
    \frac{|I_0|\inf_{z\in I_0}\phi(z)}{M(0)}
    \geq
    \left(\frac{N}{N-1}\right)^{N-1}
    \frac{N}{N-\mu}
    \sqrt{\frac{\mu N}{2\pi(N-1)}}
    >\frac{2}{\sqrt\pi}>1.
  \]
  The case $\mu+2\leq k\leq N$ follows by symmetry, interchanging successes and failures.
\end{proof}

\begin{proof}[Proof of Lemma~\ref{lem:binomial-envelope}]
  For $1\leq k\leq N-1$, the upper bound in
  Lemma~\ref{lem:binomial-weight} gives
  \[
    \widehat\pi_k(z)\leq M(k).
  \]
  If $k\in\{0,N\}$, then $\widehat\pi_k(z)=M(k)$.  Hence, for $k\notin S$,
  Lemma~\ref{lem:binomial-cell} gives
  \[
    \widehat\pi_k(z)\leq M(k)
    \leq |I_k|\inf_{w\in I_k}\phi(w).\qedhere
  \]
\end{proof}

\end{document}